\definecolor{DarkGreen}{rgb}{0.1,0.5,0.1}
\definecolor{DarkRed}{rgb}{0.5,0.1,0.1}
\definecolor{DarkBlue}{rgb}{0.1,0.1,0.5}
\newcommand{\FC}{FindList} % This is the name of the algorithm that does linear algebra
\newcommand{\LDS}{SlowListDecode} % This is the name of the proto-list-decoding algorithm that runs too slowly
\newcommand{\cC}{\ensuremath{\mathcal{C}}}
\newcommand{\F}{{\mathbb F}}
\newcommand{\inset}[1]{\left\{#1\right\}}
\newcommand{\inparen}[1]{\left(#1\right)}
\newcommand{\suchthat}{\,:\,}
\newcommand{\supp}{\mathrm{Supp}}
\newcommand{\poly}{\mathrm{poly}}
\newcommand{\Ker}{\ensuremath{\operatorname{Ker}}}
\newcommand{\eps}{\varepsilon}
\renewcommand{\epsilon}{\varepsilon}
\newtheorem{theorem}{Theorem}[section] 
\newtheorem*{theorem*}{Theorem} 
\newtheorem{lemma}[theorem]{Lemma} 
\newtheorem{definition}[theorem]{Definition}
\newtheorem{observation}[theorem]{Observation}
\newtheorem{claim}[theorem]{Claim}
\newcommand{\restatethm}[3]{\medskip \noindent {\bf #1} (#2). \textit{#3}\\}
\title{Linear-time Erasure List-decoding of Expander Codes}
\author{
Noga Ron-Zewi\thanks{Department of Computer Science, University of Haifa. \texttt{noga@cs.haifa.ac.il}. Research supported in part by BSF grant 2017732} 
\and 
Mary Wootters\thanks{Department of Computer Science and Department of Electrical Engineering, Stanford University.  \texttt{marykw@stanford.edu}.  Research supported in part by NSF CAREER award CCF-1844628 and by NSF-BSF award CCF-1814629, as well as by a Sloan Research Fellowship.}
\and 
Gilles Z\'{e}mor\thanks{Institut de Math\'{e}matiques de Bordeaux, Universit\'{e} de Bordeaux.  \texttt{zemor@math.u-bordeaux.fr}.  
}}
\date{\today}
\begin{document}
\maketitle

\begin{abstract}
We give a linear-time erasure list-decoding algorithm for expander codes. More precisely, let $r > 0$ be any integer.  Given an inner code $\cC_0$ of length $d$, and a $d$-regular bipartite expander graph $G$ with $n$ vertices on each side, we give an algorithm to list-decode the expander code $\cC = \cC(G, \cC_0)$ of length $nd$ from approximately $\delta \delta_r nd$ erasures in time $n \cdot \poly(d2^r / \delta)$, where $\delta$ and $\delta_r$ are the relative distance and the $r$'th generalized relative distance of $\cC_0$, respectively. To the best of our knowledge, this is the first linear-time algorithm that can list-decode expander codes from erasures beyond their (designed) distance of approximately $\delta^2 nd$.

To obtain our results, we show that an approach similar to that of (Hemenway and Wootters, \emph {Information and Computation}, 2018) can be used to obtain such an erasure-list-decoding algorithm with an exponentially worse dependence of the running time on $r$ and $\delta$; then  we show how to improve the dependence of the running time on these parameters.
\end{abstract}

\section{Introduction}\label{sec:intro}

In coding theory, the problem of \em list-decoding \em is to return all
codewords that are close to some received word $z$; in \em algorithmic
list-decoding, \em the problem is to do so efficiently.  While there has been a
great deal of progress on algorithmic list-decoding in the past two
decades~\cite{GS-list-dec,PV05,GR06a,GW17,GX12,GX13,Kop15,GK14,HRW19,KRSW18},
most work has relied crucially on algebraic constructions, and thus it is interesting to develop combinatorial tools to construct efficiently list-decodable codes with good parameters. 

In this work, we consider the question of list-decoding \em expander codes, \em introduced by Sipser and Spielman in \cite{SS96}.
We define expander codes formally in Section~\ref{sec:prelim}, but briefly, the expander code $\cC(G, \cC_0)$ is a linear code constructed from a $d$-regular bipartite expander graph $G$ and a linear inner code $\cC_0 \subseteq \F_2^d$.  A codeword of $\cC(G, \cC_0)$ is a vector in $\F_2^{E(G)}$ which is a labeling of edges in $G$.  The constraints are that, for each vertex $v$ of $G$, the labels on the $d$ edges incident to $v$ form a codeword in $\cC_0$.

Expander codes are notable for their very efficient unique
decoding algorithms~\cite{SS96, Z01,LMSS01,SR03,AS05,BZ02,BZ05,BZ06, RS06, HOW13}.
However, very little is known about the algorithmic list-decodability of expander codes, and it is an open problem to find a family of expander codes that admit fast linear-time list-decoding algorithms with good parameters.
Motivated by this open problem, our main contribution is a linear-time algorithm for list decoding expander codes \em from erasures. \em  

\paragraph{Erasure list-decoding.} Erasure-list-decoding is a variant of list-decoding where the received word $z$ may have some symbols which are ``$\bot$'' (erasures), and the goal is to recover all codewords consistent with~$z$.  More formally, let $\cC \subseteq \F_2^N$ be a binary code of length $N$. 
For $z \in (\F_2 \cup \bot)^N$, define
\[\mathrm{List}_{\cC}(z) := \left\{ c \in \cC \suchthat c_i = z_i \text{ whenever } z_i \neq \bot \right\}. \]
We say that $\cC$ is \em erasure-list-decodable \em from $e$ erasures with list size $L$ if for any $z \in (\F_2 \cup \{\bot\})^N$ with at most $e$ symbols equal to $\bot$, $|\mathrm{List}_{\cC}(z)| \leq L$.

Erasure list-decoding has been studied before~\cite{Gur03,GI02,GI04a,GO06,DJX14,HW18,BDT18}, motivated both by standard list-decoding and as an interesting combinatorial and algorithmic question in its own right.
It is known that the erasure-list-decodability of a linear code is precisely captured by its generalized distances.  The \em $r$'th (relative)\footnote{Throughout this paper, we will work with the relative generalized distances (that is, measured as a fraction of coordinates).  We will omit the adjective ``relative'' to describe these quantities in the future.}  generalized  distance \em $\delta_r$ of a linear code $\cC \subseteq \F_2^N$ is the minimum fraction of coordinates   which are not identically zero in an $r$-dimensional subspace of $\cC$, that is,
\[ \delta_r = \frac{1}{N} \min_{V} \big|\inset{ i \suchthat \exists v\in V, v_i \neq 0 }\big|,\]
where the minimum is taken over all linear subspaces $V \subseteq \cC$ of dimension $r$.
Thus, $\delta_1$ coincides with the traditional \em (relative) distance \em $\delta$ of the code, which for linear codes equals the minimum relative weight of any nonzero codeword.  
The generalized distances of a linear code $\cC$ characterize its erasure list-decodability:

\begin{lemma}[\cite{Gur03}]\label{lem:gd}
Let $\cC \subseteq \F_2^N$ be a linear code. Then $\cC$ is erasure-list-decodable from $e$ erasures with list size $L$ if and only if $\delta_r(\cC) > e/N$, where $r = 1 + \lfloor \log_2(L) \rfloor$.
\end{lemma}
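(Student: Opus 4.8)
The plan is to prove both directions by relating the list $\mathrm{List}_{\cC}(z)$ to an affine subspace, and then using a counting argument over $\F_2$. First I would reduce to the case where the received word $z$ itself is consistent with some codeword, i.e.\ $\mathrm{List}_{\cC}(z) \neq \emptyset$; otherwise the list is empty and both directions are trivial. Fix such a $z$, let $E = \{ i : z_i = \bot\}$ be the erased set with $|E| \le e$, and let $c^{(0)} \in \mathrm{List}_{\cC}(z)$ be any codeword consistent with $z$. The key observation is that $c \in \mathrm{List}_{\cC}(z)$ if and only if $c - c^{(0)} \in \cC$ and $(c - c^{(0)})_i = 0$ for all $i \notin E$. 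Hence if we set $V_E := \{ v \in \cC : v_i = 0 \text{ for all } i \notin E\}$ — the subcode of $\cC$ supported on $E$ — then $\mathrm{List}_{\cC}(z) = c^{(0)} + V_E$ is an affine translate of the \emph{linear} subspace $V_E$, and in particular $|\mathrm{List}_{\cC}(z)| = 2^{\dim V_E}$.

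Next I would translate the generalized-distance condition into a statement about these subcodes $V_E$. The quantity $\dim V_E$ is exactly the largest $r$ for which there is an $r$-dimensional subspace $V \subseteq \cC$ all of whose coordinates outside $E$ vanish, i.e.\ with $|\{i : \exists v \in V, v_i \neq 0\}| \le |E| \le e$. Comparing with the definition $\delta_r(\cC) = \tfrac1N \min_{\dim V = r} |\{i : \exists v \in V, v_i \neq 0\}|$, we get the clean equivalence: for any integer $r \ge 1$, there exists a set $E$ with $|E| \le e$ and $\dim V_E \ge r$ if and only if $\delta_r(\cC) \le e/N$. Equivalently, $\dim V_E \le r-1$ for \emph{every} $E$ with $|E| \le e$ if and only if $\delta_r(\cC) > e/N$.

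Putting the two pieces together: $\cC$ is erasure-list-decodable from $e$ erasures with list size $L$ iff $|\mathrm{List}_{\cC}(z)| = 2^{\dim V_E} \le L$ for all valid $z$ (equivalently, all $E$ with $|E| \le e$), iff $\dim V_E \le \lfloor \log_2 L \rfloor$ for all such $E$, iff $\delta_r(\cC) > e/N$ for $r = 1 + \lfloor \log_2 L \rfloor$, where the last step uses the equivalence of the previous paragraph with $r-1 = \lfloor \log_2 L\rfloor$. This gives both directions simultaneously.

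I do not expect a serious obstacle here; the argument is essentially a dictionary translation. The one point requiring minor care is the floor/off-by-one bookkeeping around $L$ versus $2^{\dim V_E}$: since list sizes realized by linear codes over $\F_2$ are always powers of two, the constraint ``$\le L$'' is equivalent to ``$\le 2^{\lfloor \log_2 L\rfloor}$'', which is what makes $r = 1 + \lfloor \log_2 L \rfloor$ the right threshold rather than something cruder. I would also state explicitly that the empty-list case is covered, so that the ``if'' direction (bounding the list for \emph{every} $z$, including inconsistent ones) is fully handled.
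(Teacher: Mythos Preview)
Your proposal is correct and is the standard argument for this characterization. Note, however, that the paper does not supply its own proof of this lemma: it is stated with a citation to~\cite{Gur03} and used as background, so there is no in-paper proof to compare against. Your dictionary between $\mathrm{List}_{\cC}(z)$ and the supported subcode $V_E$, together with the observation that realized list sizes are powers of two (so $|\mathrm{List}_{\cC}(z)|\le L \Leftrightarrow \dim V_E \le \lfloor \log_2 L\rfloor$), is exactly the intended reasoning, and your handling of the empty-list case and the off-by-one at $r = 1 + \lfloor \log_2 L\rfloor$ is clean.
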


If $\cC$ is linear, then it can be erasure list-decoded in polynomial time by solving a linear system.  Thus, the combinatorial result of Lemma~\ref{lem:gd} comes with a polynomial-time algorithm.  

Our goal in this paper is twofold.  First, 
we aim to develop algorithms to erasure list-decode expander codes beyond the minimum distance of the code with small list size.  Second, we aim to do so in linear time, faster than the straightforward algorithm described above.

\paragraph{Our Results.}
Our main result is a linear-time erasure list-decoding algorithm for expander codes beyond the (designed) minimum distance.
\begin{theorem}\label{thm:main}
Let $\cC_0 \subseteq \F_2^d$ be a linear code with distance $\delta$ and $r$'th generalized distance $\delta_r$.  Let $G = (L \cup R,E)$ be the double cover
\footnote{The \em double cover \em of a graph $\tilde{G} = (\tilde{V},\tilde{E})$
is the bipartite graph $G = (L \cup R,E)$ defined as follows.  Let $L$ and $R$ be
two copies of $\tilde{V}$; there is an edge between $u \in L$ and $v \in R$ if and only if $(u,v) \in \tilde{E}$ (see Section \ref{sec:prelim}).}
 of a $d$-regular expander graph on $n$ vertices with expansion $\lambda = \max\{\lambda_2, |\lambda_n|\}$.
Let $\cC = \cC(G, \cC_0)$ be the expander code that results.
Let $\eps > 0$, and suppose that
$ \frac{\lambda}{d} \leq \frac{ \eps^2 \delta^2 }{ 2^{r+4} }. $

Then there is an algorithm \textsc{ListDecode} which,
given a received word $z \in (\F_2 \cup \inset{ \bot })^E$ with at most
$(1 - \eps)\delta \delta_rdn$
erasures,
 runs in time 
$ n \cdot \poly\inparen{ \frac{ 2^r d }{ \eps \delta} },$
and returns a matrix $L\in \F_2^{nd \times a}$ and a vector $\ell \in \F_2^{nd}$ so that 
$ \mathcal{L} := \mathrm{List}_{\cC}(z) = \inset{ Lx + \ell\suchthat x \in \F_2^a } $
where $a := \dim(\mathcal{L})$ satisfies
$ a \leq \frac{ 2^{2r+7} }{ \eps^4 \delta^4 }.$
\end{theorem}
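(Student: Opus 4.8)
The plan is to separate a purely linear‑algebraic statement about $\cC$ from the algorithmic statement, and to attack the former using the expansion of $G$.

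\textbf{Step 1: reduce to the ambiguity space.} Let $S\subseteq E$ be the set of erased coordinates, so $|S|\le(1-\eps)\delta\delta_r dn$. Since $\cC$ is linear, $\mathrm{List}_\cC(z)$ is either empty or a coset of the linear space $\mathcal{V}:=\cC\cap\{x\in\F_2^E:\supp(x)\subseteq S\}$; if $\ell\in\mathrm{List}_\cC(z)$ then $\mathrm{List}_\cC(z)=\ell+\mathcal{V}$, so $a=\dim(\mathcal{L})=\dim\mathcal{V}$. It therefore suffices to: (i) decide emptiness and, if nonempty, output one $\ell\in\mathrm{List}_\cC(z)$; (ii) output a basis of $\mathcal{V}$ (the columns of $L$); and (iii) prove $\dim\mathcal{V}\le 2^{2r+7}/(\eps^4\delta^4)$. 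Claim (iii) is the heart of the matter: by Lemma~\ref{lem:gd} it is equivalent to a lower bound on the generalized distances of the \emph{expander} code itself, $\delta_{a+1}(\cC)>(1-\eps)\delta\delta_r$, i.e.\ every subcode of $\cC$ of dimension $a+1$ has relative support exceeding $(1-\eps)\delta\delta_r$.

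\textbf{Step 2: bound $\dim\mathcal{V}$ by peeling and expander mixing.} Run the standard ``peeling'' process on the subgraph $(L\cup R,S)$: repeatedly, whenever a vertex $v$ has fewer than $\delta d$ not‑yet‑frozen incident edges, freeze all of $E(v)$ to the values it is forced to take (legitimate because $\cC_0$ has distance $\delta d$, so a $\cC_0$‑codeword supported on $<\delta d$ positions is $0$). When this halts, every $c\in\mathcal{V}$ is supported on the set $U\subseteq S$ of unfrozen edges, and the subgraph $H=(L\cup R,U)$ has every non‑isolated vertex of degree $\ge\delta d$; since $|U|\le(1-\eps)\delta\delta_r dn$ this forces $H$ to touch at most $(1-\eps)\delta_r n$ vertices on each side. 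Next, by the generalized‑distance property of $\cC_0$, any vertex $v$ at which $\mathcal{V}|_{E(v)}$ has dimension $\ge r$ must meet $\ge\delta_r d$ edges of $U$, so there are at most $(1-\eps)\delta n$ such ``high‑ambiguity'' vertices on each side; a second, coarser round of peeling (with threshold $\delta_r d$) confines all but a bounded amount of the ambiguity, and the expander mixing lemma — fed the hypothesis $\lambda/d\le\eps^2\delta^2/2^{r+4}$ — rules out the only remaining configuration, in which the residual ambiguous edges would be both numerous and spread across many vertices on both sides while still meeting the $\cC_0$‑constraint at every vertex. This pins $\dim\mathcal{V}$ down to $O\!\left((2^r/(\eps^2\delta^2))^2\right)\le 2^{2r+7}/(\eps^4\delta^4)$. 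I expect this step to be the main obstacle: the easy arguments only give a bound on the number of ambiguous vertices that is \emph{linear} in $n$, and getting from there to a bound independent of $n$ is exactly where expansion, and the precise interplay among $\delta$, $\delta_r$, $\eps$, and $\lambda/d$, must be exploited.

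\textbf{Step 3: the algorithm and its running time.} For (i) and (ii): maintain at each vertex $v$ an affine subspace $A_v\subseteq\cC_0$ of local codewords still consistent with $z$ and with the constraints propagated from $v$'s neighbors, initialized by locally erasure‑decoding $\cC_0$; then sweep repeatedly over $R$ and over $L$, each time recomputing $A_v$ as the intersection of $\cC_0$ with the coordinate values its neighbors now fix, declaring failure if some $A_v$ becomes empty, until the $A_v$ stabilize. By the analysis of Step 2 the still‑ambiguous vertices at the fixed point form a small, well‑structured set, so $\ell$ and a basis of $\mathcal{V}$ can be extracted by a constant‑size linear‑algebra computation. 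Implemented in the naive Hemenway--Wootters style this already runs in time $n\cdot\poly(d)$ times a factor exponential in $\poly(2^r/\delta)$ (the cost of enumerating over an ambiguous neighborhood); to reach the claimed $n\cdot\poly\!\inparen{2^r d/(\eps\delta)}$ one instead processes vertices in an order that makes each local update a $\poly(d)$‑size linear‑algebra operation and bounds the number of sweeps in which a given $A_v$ actually changes, so the total work is $n$ times $\poly(d,2^r,1/(\eps\delta))$. Confirming this bookkeeping, and that the fast algorithm's fixed point really is the affine space $\ell+\mathcal{V}$ analyzed combinatorially, is the remaining routine work.
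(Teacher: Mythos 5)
Your Step 1 is a correct and clean reduction: $\mathrm{List}_\cC(z)$ is a coset of $\mathcal{V}=\cC\cap\{x:\supp(x)\subseteq S\}$, and bounding $a=\dim\mathcal{V}$ is equivalent, via Lemma~\ref{lem:gd}, to a generalized-distance lower bound for the expander code itself. The gap is in Step 2, and you have in fact pointed at it yourself: the peeling and counting arguments you give only show that the number of ``ambiguous'' vertices (those where $\mathcal{V}|_{E(v)}$ is nonzero, or has dimension $\ge r$) is at most a constant fraction of $n$, and from there you assert that ``a second, coarser round of peeling'' plus ``the expander mixing lemma rules out the only remaining configuration.'' No such configuration is identified, and no argument is given; a priori $\dim\mathcal{V}$ could still be as large as $(r-1)(1-\eps)\delta n$, since the injection $\mathcal{V}\hookrightarrow\prod_v \mathcal{V}|_{E(v)}$ only gives $\dim\mathcal{V}\le\sum_v \dim(\mathcal{V}|_{E(v)})$, which is linear in $n$. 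The paper's mechanism for collapsing this to a constant is the local/global equivalence relation on edges: at each good vertex the local list $L_v$ is an affine space of dimension at most $r-1$, so its generator matrix partitions the incident edges into at most $2^{r-1}$ classes within which one symbol determines the others; after discarding classes smaller than $\frac{\eps^2\delta^2}{2^{r+3}}d$ (Algorithm \textsc{FindHeavyEdges}), the expander mixing lemma forces every surviving \emph{global} equivalence class to contain at least $\frac{\eps^4\delta^4}{2^{2r+7}}dn$ edges (Lemma~\ref{lem:equiv_1}), hence there are at most $s=\frac{2^{2r+7}}{\eps^4\delta^4}$ of them, and the whole list is determined by $s$ bits of advice. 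This is the heart of the proof and it is absent from your sketch; I do not see how to obtain the $n$-independent bound without it or something equivalent.

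Step 3 has a related problem. Your proposed fixed-point iteration on affine subspaces $A_v$ only enforces local consistency, and you give no argument that its fixed point is exactly $\ell+\mathcal{V}$ nor that the number of productive sweeps is bounded; local stability of the $A_v$ does not certify global membership in $\cC$. The paper instead writes every edge label as an explicit affine function of the $s$ advice bits (building matrices $A^{(t)},b^{(t)}$ that mirror \textsc{UniqueDecode}), and then solves the sparse linear system $HAx=Hb$ against the parity-check matrix of $\cC$ to determine which advice strings yield genuine codewords; this is what replaces the $2^s$ enumeration by $\poly(s,d)$ work per step and also guarantees correctness of the output coset. Your ``process vertices in a good order and bound the number of sweeps'' is not a substitute for this final global linear-algebra step.
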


Because $\delta_r > \delta$ for any non-trivial linear code (any code of dimension $>1$), the radius that Theorem~\ref{thm:main} achieves is beyond the (designed) minimum distance of $\cC$, which is approximately $\delta^2 d n$.  To the best of our knowledge, this is the first linear-time list-decoding algorithm for expander codes that achieves this with a non-trivial list size.

In light of Lemma~\ref{lem:gd}, the ultimate result we can hope for is 
an algorithm that list-decodes up to $\delta_r(\cC)$ fraction of 
erasures with list size $2^{r-1}$ for any $r \geq 1$. The quantity
$\delta(\cC_0)  \cdot \delta_r(\cC_0)$ in Theorem~\ref{thm:main} may
suggest it plays the role of a 'designed' $r$'th generalized distance,
especially since for $r=1$ it does (up to an $\epsilon$ term) coincide
with the expander designed distance. However, we cannot expect
$\delta(\cC_0)  \cdot \delta_r(\cC_0)$ to be a general lower bound
on the $r$'th generalized distance of an expander code,
which implies in particular that the list-size in Theorem~\ref{thm:main} 
has to be larger than $2^{r}$. Indeed,
already in the special case of \em tensor codes \em (i.e., when the 
graph $G$ is the complete bipartite graph that has perfect expansion), 
the generalized distance has been shown \cite{WY93, Scha00} to be a 
complicated quantity that can be lower than $\delta(\cC_0)  \cdot 
\delta_r(\cC_0)$: in the general expander case, finding a reasonable
description of the worst-case behavior of generalized distances seems
quite challenging.

Note however that our results do imply a weak bound on the generalized distances of an expander code, namely that $\delta_r(\cC)$ is approximately at least $\delta(\cC_0) \cdot \delta_{\Theta(\log r)}( \cC_0)$. Moreover, for the special case of $r=2$, 
we are able to show the following bound on the second generalized distance of an expander code. 

\begin{lemma}\label{lem:delta2} 
Let $\cC_0 \subseteq \F_2^d$ be a linear code with distance $\delta$ and second generalized distance $\delta_2$, and
let $G = (L \cup R,E)$ be the double-cover of a $d$-regular expander graph with expansion $\lambda$.
Let $\epsilon>0$, and suppose that $\frac \lambda d \leq \frac{\delta_2 \delta^2 \epsilon^2} {16}$.
Then the expander code $\cC(G, \cC_0)$ has second generalized distance at least 
$
(1-\epsilon)\cdot \delta \cdot \min\{\delta_2, 2\delta\} .
$
\end{lemma}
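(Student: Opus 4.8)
The plan is to take a $2$-dimensional subspace $V \subseteq \cC(G,\cC_0)$ and lower-bound the fraction of coordinates (edges of $G$) that are not identically zero on $V$. Write this support set as $S = \bigcup_{c \in V \setminus \{0\}} \supp(c)$; since $\F_2$-subspaces of dimension $2$ have exactly three nonzero vectors $c_1, c_2, c_3 = c_1 + c_2$, we have $S = \supp(c_1) \cup \supp(c_2) \cup \supp(c_3)$. The idea is to run the standard expander-code distance argument (the one that gives the designed distance $\delta^2 dn$) but applied to the ``vertex support'' induced by $S$ rather than by a single codeword. Concretely, for a side $L$ of the bipartition let $A \subseteq L$ be the set of left vertices $v$ whose incident edges are not all in the zero-set, i.e. for which some $c_j$ restricted to the edges at $v$ is a nonzero codeword of $\cC_0$; define $B \subseteq R$ analogously. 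Each vertex in $A \cup B$ then contributes at least $\delta d$ edges of $S$ incident to it, while double counting shows $|S| \geq \frac{\delta d}{2}(|A| + |B|)$ roughly — but the real gain comes from a more careful local count, see below.

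The key refinement is the local analysis at each vertex. Fix $v \in A$. Restricting the three codewords $c_1, c_2, c_3$ to the $d$ edges at $v$ gives three elements of $\cC_0$ summing to zero; they span a subspace $W_v \subseteq \cC_0$ of dimension either $1$ or $2$. If $\dim W_v = 2$, then by definition of the second generalized distance the number of edges at $v$ in $S$ is at least $\delta_2 d$. If $\dim W_v = 1$, then at least two of $c_1, c_2, c_3$ agree (after restriction) up to the all-zero word, and the support at $v$ has size at least $\delta d$; moreover in this case the ``missing'' codeword forces structure that can be exploited globally. So the plan is: partition $A$ (and $B$) into $A_2$ (vertices where $\dim W_v = 2$) and $A_1$ (where $\dim W_v = 1$), and similarly for $B$; then $|S| \geq \frac{1}{2}\left( \delta_2 d (|A_2| + |B_2|) + \delta d (|A_1| + |B_1|) \right)$.

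Next I would feed these vertex sets into the expander mixing lemma, exactly as in the classical proof of the Sipser--Spielman designed distance. The point is that the nonzero-ness of a codeword $c_j$ propagates: the set of left vertices where $c_j$ is nonzero and the set of right vertices where $c_j$ is nonzero must have many edges between them, and the mixing lemma (using $\lambda/d \leq \delta_2\delta^2\eps^2/16$) forces $|A|, |B| \geq (1-\eps/2)\delta n$ (this is the same bound one gets for a single nonzero codeword). The two regimes of the minimum $\min\{\delta_2, 2\delta\}$ then arise from balancing: if most of $A$ lies in $A_2$ we get a $\delta \cdot \delta_2$-type bound directly; if most of $A$ lies in $A_1$, then — because on $A_1$ two of the three codewords coincide locally — by a pigeonhole/averaging argument one of the three codewords, say $c_1$, is nonzero on a $\geq \frac{2}{3}$-fraction (really $\geq \frac12$ after the right bookkeeping) of the vertices of $A$, \emph{and} a second codeword $c_2$ independent from it is nonzero on a comparable fraction; running the single-codeword distance bound for $c_1$ and $c_2$ separately and observing their vertex-supports are each of size $\geq (1-\eps/2)\delta n$, the union of edge-supports has size at least $(1-\eps)\cdot \delta \cdot 2\delta \cdot dn$ once we argue the two edge-supports do not overlap too much on a constant fraction of vertices. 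Combining the two regimes and dividing by $|E| = dn$ yields the claimed bound $(1-\eps)\cdot\delta\cdot\min\{\delta_2, 2\delta\}$.

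The main obstacle I anticipate is the $\dim W_v = 1$ regime: there, the three codewords are locally proportional but can differ globally, so one must extract \emph{two} globally independent codewords each with large vertex-support and then control the overlap of their edge-supports to get the factor $2\delta$ rather than just $\delta$. Making the pigeonhole step lose only an $\eps$ (rather than a constant) factor, and correctly interleaving it with the expander mixing lemma error terms, is the delicate part; everything else is a routine adaptation of the classical expander-code distance proof.
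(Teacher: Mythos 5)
Your opening move --- classifying each support vertex by whether the three restrictions $c_1|_{\Gamma(v)},c_2|_{\Gamma(v)},c_3|_{\Gamma(v)}$ span a $1$- or $2$-dimensional subspace of $\cC_0$ --- is exactly the paper's dichotomy (its set $W$ is your $A_2\cup B_2$), and the $2$-dimensional regime does yield the $\delta\delta_2$ branch. Two remarks there, though: your bound $|S|\geq\tfrac12\bigl(\delta_2 d(|A_2|+|B_2|)+\delta d(|A_1|+|B_1|)\bigr)$ loses a factor $2$ unless you can show the $2$-dimensional vertices are plentiful on \emph{both} sides, and your assertion $|A|,|B|\geq(1-\eps/2)\delta n$ is not what the mixing lemma gives (it only controls $\sqrt{|A||B|}$, i.e.\ the larger side). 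The paper sidesteps both issues with a two-step argument: even a non-negligible $W\cap L$ (of size only $\mathrm{poly}(\delta\eps)n$) forces, via the mixing lemma and the fact that each $W$-vertex has $\geq\delta_2 d$ support edges, the set $T_1$ of \emph{right} vertices touching $F\cup F'$ to have size $\geq(\delta_2-o(1))n$; each vertex of $T_1$ then contributes $\geq\delta d$ support edges, giving $\delta\delta_2(1-\eps)dn$. These are repairable inefficiencies.

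The genuine gap is in the $1$-dimensional regime. Your plan is to extract two independent codewords with large vertex-supports and then ``argue the two edge-supports do not overlap too much.'' That step cannot work: when $W$ is small the overlap $|\supp(c)\cap\supp(c')|$ is generically of the same order as each individual support (the extremal configuration has $|F|,|F'|\approx 2\delta^2dn$ with $|F\cap F'|\approx\delta^2dn$), so inclusion-exclusion on $|F|+|F'|-|F\cap F'|$ with only the designed-distance lower bound on $|F|,|F'|$ cannot reach $2\delta^2dn$. The missing idea is to use the identity $|F\cup F'|=|F\cap F'|+|F\triangle F'|$ (valid over $\F_2$ since $F\triangle F'=\supp(c+c')$), note that $|F\triangle F'|$ is at least the designed distance $\delta(\delta-\lambda/d)dn$ because $c+c'$ is a nonzero codeword, and then prove separately that $|F\cap F'|\geq(1-\eps)\delta^2dn$. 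The latter is the paper's Claim~\ref{clm:delta2_intersect}: a vertex $v\notin W$ that meets an edge of $F\cap F'$ must have \emph{all} of its $\geq\delta d$ support edges inside $F\cap F'$ (an edge in both $F\cap F'$ and $F\triangle F'$ at $v$ would put $v$ in $W$), so the standard Sipser--Spielman distance argument can be run on the intersection alone, up to an error controlled by $|W|$. The remaining case where $|F\cap F'|$ is tiny is handled by plain inclusion-exclusion. Without this decomposition your argument bottoms out at $\delta^2dn$ rather than $2\delta^2dn$ in the $W$-small regime.
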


Note that under the mild assumption that $\delta_2 ( \cC_0) \leq  2\delta( \cC_0)$ (satisfied by any code that has two minimum-weight codewords with disjoint support),
 the above lemma gives a lower bound of approximately
 $\delta_2( \cC_0) \delta ( \cC_0)$ on the second generalized distance of expander codes.

Finally, note that while we do not know if the list size returned by our algorithm can be generally improved, our algorithm can still list-decode an expander code $\cC$ from up to a $\delta_r(\cC)$ fraction of erasures with list size $2^{r-1}$ for some values of $r$: If $r'$ is such that $\delta_{r'}(\cC) <\delta(\cC_0) \delta_r(\cC_0)$ for some $r = O(1)$, our algorithm will run in linear time and return a list of size $2^{r'-1}$ given a $\delta_{r'}(\cC)$ fraction of erasures.

\subsection{Technical Overview}\label{sec:overview}
In this section, we give a brief overview of our approach.  The basic idea is similar to the approach in \cite{HW18}; however, as we discuss more in Section~\ref{sec:related} below, in that work the goal was \em list-recovery, \em a generalization of list-decoding.  In this work we can do substantially better by restricting our attention to list-decoding, as well as by tightening the analysis of \cite{HW18}.

Let $G=(L \cup R,E)$ be the double-cover of a $d$-regular expander graph, and let $\cC_0 \subseteq \F_2^d$ be a linear code with distance $\delta$ and 
$r$-th generalized distance $\delta_r$.
Since the inner code $\cC_0$ is linear and has $r$'th generalized distance $\delta_r$, there is an $O(d^3)$-time algorithm to erasure list-decode $\cC_0$ from up to $\delta_r d$ erasures.  Our first step will be to do this at every vertex $v \in L \cup R$ that we can, to produce a list $L_v$ at each such vertex.   

In order to ``stitch together'' these lists, we define a notion of \em equivalence \em between edges, similar to the notion in \cite{HW18}.
Suppose that $(u,v)$ and $(w,v)$ are edges incident to a vertex $v$, so that there is some $b \in \F_2$ so that for any $c \in L_v$, $c_{(u,v)} = b + c_{(w,v)}$.  Then, even if we have not pinned down a symbol for $(u,v)$ or $(w,v)$, 
we know that for any legitimate codeword $c \in \mathrm{List}_{\cC}(z)$,
assigning a symbol for one of these edges implies an assignment for the other.
In this case, we say that $(u,v) \sim (w,v)$.  Because the lists $L_v$ are
actually affine subspaces, there are not many equivalence classes at each vertex
(and in particular substantially fewer equivalence classes than in the approach used in \cite{HW18}).

With these equivalence classes defined, we actually give two algorithms, \textsc{\LDS} and \textsc{ListDecode}.   As the name suggests, \textsc{\LDS} is a warm-up that has a worse dependence on $\epsilon, \delta$ and $r$, but is easier to understand.  We describe \textsc{\LDS} (given in Section \ref{sec:slow}, Figure~\ref{fig:alg1}) here first, and then describe the changes that need to be made to arrive at our final algorithm, \textsc{ListDecode} (given in Section \ref{sec:faster}, Figure~\ref{fig:alg2}).

The main idea of \textsc{\LDS} is to choose $s = \poly(2^{r}, 1/\epsilon, 1/\delta)$ large equivalence classes and generate a list of all $2^s$ possible labelings for these equivalence classes.  For each such labeling, we now hope to uniquely fill in the rest of the codeword, to arrive at a list of size $2^s$. 
One might hope that labeling 
these $s$ large equivalence classes would leave a fraction of unlabeled symbols
less than the designed distance of $\cC$,
allowing us to immediately use the known linear-time erasure unique decoding algorithm
for the expander code.  Unfortunately, this is not in general the case.  However, we show that there are many vertices $v$ so that the number of unlabeled edges incident to $v$ is at most $\delta(\cC_0)d$.  Thus, we may run the unique decoder for $\cC_0$ (in time $O(d^3)$) at each such vertex to generate yet more labels.  It turns out that at this point, there \em are \em enough labels to run $\cC$'s unique decoding algorithm and finish off the labeling.

Naively, the algorithm described above runs in time at least $2^s$, since we must loop over all $2^s$ possibilities.  This is exponential in $\eps$ and $\delta$ and doubly-exponential in $r$.  
The idea behind our final algorithm \textsc{ListDecode} is to take advantage of the linear structure of the lists $L_v$ to find a short description of all of the legitimate labelings.  We will show in Section~\ref{sec:faster} how to do this in time $n\cdot\poly( d 2^r / \eps \delta )$ by leveraging the sparsity of $\cC$'s parity-check matrix.

\subsection{Related Work}\label{sec:related}
\paragraph{Work on list-decoding expander codes.}
The work that is perhaps the most related to ours is \cite{HW18}, which seeks to \em list-recover \em expander codes in the presence of erasures in linear time.\footnote{We note that other works, such as \cite{GI04a}, have also had this goal, but to the best of our knowledge \cite{HW18} obtains the best known results, so we focus on that work here.}
List-recovery is a variant of list-decoding which applies to codes over a large alphabet $\Sigma$: instead of receiving as input a vector $z \in \{0,1\}^N$, the decoder receives a collection of lists, $S_1, \ldots, S_N \subseteq \Sigma$, and the goal is to return all codewords $c \in \Sigma^N$ so that $c_i \in S_i$ for all $i$.  In the setting of erasures, 
some lists have size $|\Sigma|$, in which case we may as
well replace the whole list by a $\bot$ symbol.

List decoding from erasures is a special case of list-recovery with erasures,
where the $S_i$  that are not $\bot$  have size one.  However, existing list-recovery algorithms will not immediately work in our setting, as we consider binary codes: list-recovery is only possible for codes with large alphabets. 

Our first observation is that the approach of \cite{HW18} for erasure list-recovery can be used to obtain an algorithm for erasure list-decoding in linear time, even for binary codes.  As described above, our first step is to erasure list-decode $\cC_0$ at each vertex, leaving us with lists $L_v$ that need to be ``stitched together.''  The approach of \cite{HW18} does precisely this, although in their context the lists that they are stitching together come from list-recovering the inner code.  

However, the results of \cite{HW18} about stitching together lists do not immediately yield anything meaningful for erasure list-decoding.  More precisely, 
those results imply that an expander code $\cC(G, \cC_0)$ formed from a graph $G$ with expansion $\lambda$ and an inner code $\cC_0$ with distance $\delta$ and $r$'th generalized distance $\delta_r$ is list-decodable from up to a $\delta \delta_r \inparen{ \frac{ \delta - \lambda/d }{6} }$ fraction of erasures in time $N \cdot \exp( \exp( \exp(r) ) )$.  
In particular, the fraction of erasures that those results tolerate is smaller than the distance of the expander code, yielding only trivial results in this setting. 

Thus, while we use the same ideas as \cite{HW18}, our analysis is different and significantly tighter.  This allows us to obtain a meaningful result in our setting, corresponding to the algorithm \textsc{\LDS}.  Moreover, as described above, we are able to take advantage of the additional linear structure in our setting to improve the dependence on $r$ in the running time.

To the best of our knowledge, there is no algorithmic work on list-decoding expander codes from errors (rather than erasures) in linear time with good parameters. 
We note that \cite{MRRSW19} recently showed that there are expander codes which are \em combinatorially \em near-optimally list-decodable from errors, but this work is non-constructive and does not provide efficient list decoding algorithms.

\paragraph{Work on erasure list-decoding more generally.} It is known that, non-constructively, there are erasure-list-decodable codes of rate $\Omega(\eps)$ which can list-decode up to a $1- \eps$ fraction of erasures, with list sizes $O(\log(1/\eps))$~\cite{Gur03}.  However, this proof is non-constructive and does not provide efficient algorithms, and it has been a major open question to achieve these results efficiently.  Recent progress has been made by \cite{BDT18}, who provided a construction (although no decoding algorithm) with parameters close to this for $\eps$ which is polynomially small in $n$.

Our work is somewhat orthogonal to this line of work on erasure list-decoding
for several reasons.  First, that line of work is mostly concerned with low-rate
codes that are list-decodable from a large fraction of erasures (approaching
$1$), while expander codes tend to perform best at high rates.
Second, we are less concerned with the trade-off between rate and erasure tolerance and more concerned with efficiently erasure-list-decoding an arbitrary expander code as far beyond its (designed) distance as possible.
Finally, much of the line of work described above has focused on getting the list size down to $O(\log(1/\eps))$, which is known to be impossible for linear codes, where the best list size possible is $\Omega(1/\eps)$ \cite{Gur03}.
Since the expander codes we consider are linear, we do not focus on that goal in our work.

\paragraph{Organization.} In Section~\ref{sec:prelim}, we formally introduce the notation and definitions we will need. In Section \ref{sec:slow}, we introduce our preliminary algorithm \textsc{\LDS}, while in Section \ref{sec:faster} we describe the final algorithm that has better dependence on $\epsilon, \delta$ and $r$ in the running time. This proves our Main Theorem \ref{thm:main}.
We conclude in Section \ref{sec:delta2} with the proof of Lemma \ref{lem:delta2}, showing a bound on the second generalized distance of expander codes.

\section{Preliminaries}\label{sec:prelim}

\paragraph{Expander Graphs.}
Let $G = (L \cup R,E)$ be a bipartite graph.\footnote{In this paper we only consider undirected graphs.}
For a vertex $v \in L \cup R$, let $\Gamma(v)$ denote the set of vertices adjacent to $v$.
 For $S \subseteq L$ and $T \subseteq R$, let $E(S,T)$ denote the set of edges with endpoints in $S \cup T$, and for 
$A \subseteq L \cup R$, let $E(A) := E(A \cap L, A \cap R)$.

Let $\tilde{G} = (\tilde{V},\tilde{E})$ be a (not necessarily bipartite) $d$-regular graph on $n$ vertices.
The \em expansion \em of $\tilde G$ is $ \lambda := \max\{ \lambda_2, |\lambda_n| \},$
where $\lambda_1 \geq \lambda_2 \geq \ldots \geq \lambda_n$ are the eigenvalues of the adjacency matrix of $G$.
The  \em double-cover \em  of $\tilde{G}$
is the bipartite graph $G = (L \cup R,E)$ defined as follows.  Let $L$ and $R$ be
two copies of $\tilde{V}$; there is an edge between $u \in L$ and $v \in R$ if and only if $(u,v) \in \tilde{E}$.  
If $\tilde{G}$ is an expander graph, then $G$ obeys the \em Expander Mixing Lemma: \em

\begin{theorem}[Expander Mixing Lemma, see \em e.g. \em \cite{HLW06}] \label{thm:eml}
Suppose that $G = (L \cup R,E)$ is the double cover of a $d$-regular expander graph on $n$ vertices with expansion $\lambda$.
Then for any $S \subseteq L$ and $T \subseteq R$,
$$\left|E(S,T)- \frac{d}{n}|S||T| \right| \leq \lambda \sqrt{ |S||T| }.$$
\end{theorem}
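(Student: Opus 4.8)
The plan is to reduce the Expander Mixing Lemma to a one‑line spectral estimate on the adjacency matrix $A$ of $\tilde G$. First I would identify both $L$ and $R$ with the vertex set $\tilde V$, so that the bipartite incidence structure of the double cover $G$ is described exactly by $A$: for $S \subseteq L$ and $T \subseteq R$, the number of edges between them is $|E(S,T)| = \ind{S}^\top A\, \ind{T}$, where now $\ind{S},\ind{T} \in \{0,1\}^{\tilde V}$. Since $\tilde G$ is undirected, $A$ is real symmetric, so it has an orthonormal eigenbasis $v_1,\dots,v_n$ with eigenvalues $\lambda_1 \ge \cdots \ge \lambda_n$; and since $\tilde G$ is $d$-regular, the all‑ones vector is an eigenvector of eigenvalue $d$, which is the top of the spectrum, so $\lambda_1 = d$ and $v_1 = \tfrac{1}{\sqrt n}\mathbf 1$.

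Then I would expand the indicator vectors in this basis, $\ind S = \sum_i \alpha_i v_i$ and $\ind T = \sum_i \beta_i v_i$, and compute
\[ |E(S,T)| \;=\; \ind S^\top A\, \ind T \;=\; \sum_{i=1}^n \lambda_i \alpha_i \beta_i \;=\; d\,\alpha_1\beta_1 \;+\; \sum_{i=2}^n \lambda_i \alpha_i\beta_i. \]
Here $\alpha_1 = \langle \ind S, v_1\rangle = |S|/\sqrt n$ and likewise $\beta_1 = |T|/\sqrt n$, so the first term is exactly $\tfrac{d}{n}|S||T|$, the claimed main term. For the error term I would use $|\lambda_i| \le \lambda$ for every $i \ge 2$ (which is precisely the definition of $\lambda = \max\{\lambda_2,|\lambda_n|\}$), followed by Cauchy--Schwarz and Parseval:
\[ \Bigl| \sum_{i=2}^n \lambda_i \alpha_i \beta_i \Bigr| \;\le\; \lambda \sum_{i=2}^n |\alpha_i|\,|\beta_i| \;\le\; \lambda \Bigl(\sum_{i\ge 2}\alpha_i^2\Bigr)^{1/2}\Bigl(\sum_{i\ge 2}\beta_i^2\Bigr)^{1/2} \;\le\; \lambda\, \twonorm{\ind S}\,\twonorm{\ind T} \;=\; \lambda \sqrt{|S||T|}, \]
where the last inequality simply drops the nonnegative $i=1$ terms from each sum. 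Combining the two displays gives $\bigl| |E(S,T)| - \tfrac{d}{n}|S||T| \bigr| \le \lambda\sqrt{|S||T|}$, as desired.

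I do not expect a genuine obstacle here: the statement is classical and, once the setup is fixed, amounts to the two displayed computations. The one point I would be careful about is the spectral bookkeeping for the double cover. The ordinary (signed) adjacency matrix of the bipartite graph $G$ has eigenvalues $\pm\lambda_1,\dots,\pm\lambda_n$, and in particular both $+d$ and $-d$, so one should not apply a generic mixing‑lemma black box to $G$ itself; the correct object is the square biadjacency operator $A$, whose eigenvalues below the Perron value $d$ are bounded in absolute value by $\lambda$. Working directly with $A$ as above sidesteps this entirely. (If $\tilde G$ were allowed to have loops one would also note that a diagonal entry $A_{uu}$ affects neither the identity $|E(S,T)| = \ind S^\top A \ind T$ nor the estimate, but we take $\tilde G$ to be a simple graph.)
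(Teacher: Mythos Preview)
Your proof is correct and is the standard spectral argument for the Expander Mixing Lemma. The paper itself does not prove Theorem~\ref{thm:eml}; it simply quotes the statement with a reference to \cite{HLW06}, so there is no in-paper proof to compare against. Your remark about why one should work with the biadjacency operator $A$ of $\tilde G$ rather than the full adjacency matrix of the bipartite double cover (whose spectrum contains $-d$) is accurate and is exactly the reason the lemma is phrased for the double cover with the parameter $\lambda$ inherited from $\tilde G$.
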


\paragraph{Expander Codes.}
Let $G = (L \cup R,E)$ be the double cover of a $d$-regular expander graph on $n$ vertices, as above.
Let $\cC_0 \subseteq \F_2^d$ be a linear code, called the \em inner code. \em 
Fix an order on the edges incident to each vertex of $G$, and let $\Gamma_i(v)$ denote the $i$'th neighbor of $v$.

The expander code $\cC := \cC(G, \cC_0)$ is defined as the set of all labelings of the edges of $G$ that respect the inner code $\cC_0$.  More precisely, we have the following definition.
\begin{definition}[Expander Code]
Let $\cC_0 \subseteq \F_2^d$ be a linear code, and let $G = (L \cup R,E)$ be the double cover of a $d$-regular expander graph on $n$ vertices.
The expander code $\cC(G,\cC_0) \subseteq \F_2^E$ is a linear code of length $nd$, so that
for $c \in \F_2^{E}$,  $c \in \cC$ if and only if, for all $v \in L \cup R$,
\[ \left(c_{(v,\Gamma_1(v))}, c_{(v, \Gamma_2(v))}, \ldots, c_{(v, \Gamma_d(v))} \right) \in \cC_0. \]
\end{definition}

By counting constraints, it is not hard to see that if $\cC_0 \subseteq \F_2^d$ is a linear code of rate $R$, then $\cC(G, \cC_0) \subseteq \F_2^{E}$ is a linear code of rate at least $2R - 1$.
Moreover, it is known that expander codes have good distance: 
\begin{lemma}[\cite{SS96,Z01}]\label{lem:distance}
Let $\cC_0 \subseteq \F_2^d$ be a linear code with distance $\delta$, and let $G = (L \cup R,E)$ be the double cover of a $d$-regular expander graph with expansion $\lambda$.
Then the expander code $\cC(G, \cC_0)$ has distance at least $\delta( \delta - \lambda/ d)$.
\end{lemma}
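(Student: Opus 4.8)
The plan is to run the classical double-counting argument of Sipser--Spielman and Z\'emor, using the Expander Mixing Lemma (Theorem~\ref{thm:eml}) as the only nontrivial ingredient. Fix an arbitrary nonzero codeword $c \in \cC(G,\cC_0)$ and let $F := \supp(c) \subseteq E$ be the set of edges carrying a nonzero label; the goal is to show $|F| \geq \delta(\delta - \lambda/d)\cdot nd$, which is exactly the claimed bound on the relative distance.

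First I would identify the vertices ``touched'' by $F$: let $S \subseteq L$ be the set of left vertices incident to at least one edge of $F$, and let $T \subseteq R$ be the analogous set of right vertices. Every edge of $F$ has its left endpoint in $S$ and its right endpoint in $T$, so $F \subseteq E(S,T)$ and hence $|F| \leq |E(S,T)|$. In the other direction, for each $v \in S$ the local view $(c_{(v,\Gamma_1(v))},\ldots,c_{(v,\Gamma_d(v))})$ is a codeword of $\cC_0$ that is nonzero (since $v \in S$), hence has Hamming weight at least $\delta d$; equivalently, $v$ is incident to at least $\delta d$ edges of $F$. Summing this over $v \in S$ and using that $G$ is bipartite (so each edge of $F$ is counted once) gives $|F| \geq \delta d\,|S|$, and symmetrically $|F| \geq \delta d\,|T|$; thus $|S|,|T| \leq |F|/(\delta d)$.

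Next I would plug these estimates into the Expander Mixing Lemma. Since $|F| \leq |E(S,T)| \leq \frac{d}{n}|S||T| + \lambda\sqrt{|S||T|}$ and $|S||T| \leq |F|^2/(\delta d)^2$, we get $|F| \leq \frac{|F|^2}{n\delta^2 d} + \frac{\lambda|F|}{\delta d}$. Dividing through by $|F|>0$ yields $1 \leq \frac{|F|}{n\delta^2 d} + \frac{\lambda}{\delta d}$, which rearranges to $|F| \geq n\delta^2 d\bigl(1 - \tfrac{\lambda}{\delta d}\bigr) = \delta(\delta - \lambda/d)\cdot nd$. As $c$ was an arbitrary nonzero codeword, the relative distance of $\cC(G,\cC_0)$ is at least $\delta(\delta - \lambda/d)$, completing the proof.

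I do not expect any real obstacle here, as this is the textbook distance bound for expander codes; the only points needing a word of care are (i) that the argument only uses that each nonzero local view has weight $\geq \delta d$, not the affine/linear structure of $\cC_0$ beyond its minimum distance, and (ii) the degenerate regime $\lambda \geq \delta d$, where the asserted bound $\delta(\delta-\lambda/d)$ is nonpositive and the statement is vacuous — the displayed inequality $|F| \geq n\delta^2 d(1-\lambda/(\delta d))$ still holds trivially since its right-hand side is then $\leq 0$.
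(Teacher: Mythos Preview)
Your argument is correct and is precisely the classical Sipser--Spielman/Z\'emor proof. Note that the paper does not actually supply its own proof of Lemma~\ref{lem:distance}; it merely cites \cite{SS96,Z01} and uses the bound as a black box, so there is nothing to compare against beyond observing that your write-up is the standard one.
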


Moreover, $\cC$ can be uniquely decoded up to this fraction of erasures in linear time.

\begin{lemma}\label{lem:uniquedec}
Let $\cC_0 \subseteq \F_2^d$ be a linear code with distance $\delta$, and let $G = (L \cup R,E)$ be the double cover of a $d$-regular expander graph on $n$ vertices with expansion $\lambda$.
Let $\epsilon > 0$, and suppose that $\frac \lambda d < \frac \delta 2$. 
Then there is an algorithm \textsc{UniqueDecode} which uniquely decodes the expander code $\cC(G, \cC_0)$ from up to $(1 - \epsilon)\delta(\delta - \lambda/d)$ erasures in time $n \cdot \poly(d)/\epsilon$.
\end{lemma}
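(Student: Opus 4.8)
The plan is to run the iterative expander-code decoder of Sipser--Spielman and Z\'emor~\cite{SS96,Z01}, adapted to erasures. \textsc{UniqueDecode} proceeds in rounds that alternately process the two sides of $G$. In a round processing a side $S\in\{L,R\}$, for every vertex $v\in S$ incident to fewer than $\delta d$ erased edges (in the current partial word) we run the inner erasure decoder at $v$; since $\cC_0$ has distance $\delta$, recovering those erased symbols amounts to solving a full-rank linear system, which Gaussian elimination does in $\poly(d)$ time, and it fills in all of $v$'s erased symbols. We repeat until no erasures remain. Correctness is an induction on the symbols written: if the current partial word agrees on every non-erased coordinate with the (unique) codeword $c\in\cC$ consistent with $z$, then at a processed vertex $v$ the inner decoder's unique output must be $c$ restricted to $v$, so the invariant is maintained; when the algorithm halts the partial word equals $c\in\cC$, and $c$ is indeed unique because $\cC$ has distance at least $\delta(\delta-\lambda/d)>(1-\epsilon)\delta(\delta-\lambda/d)$ by Lemma~\ref{lem:distance}.

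The heart of the analysis is a geometric-decrease estimate via the Expander Mixing Lemma. Let $E_0$ be the input erasure set, so $|E_0|\le(1-\epsilon)\delta(\delta-\lambda/d)dn$; let $E_i$ be the erasures remaining after round $i$; and let $U_i$ be the set of vertices processed in round $i$ that were \emph{not} decodable, i.e.\ incident to at least $\delta d$ edges of $E_{i-1}$. Counting the edges of $E_0$ by their endpoint on the side of round~$1$ gives $|U_1|\le|E_0|/(\delta d)\le(1-\epsilon)(\delta-\lambda/d)n$. Every edge of $E_i$ is incident, on the side of round $i$, to $U_i$; since $E_i\subseteq E_{i-1}$, for $i\ge2$ each vertex of $U_i$ is incident to at least $\delta d$ edges of $E_{i-1}$, all of which have their other endpoint in $U_{i-1}$. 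Hence, writing $B_i$ for the set of edges joining $U_{i-1}$ to $U_i$, we get $\delta d|U_i|\le|B_i|\le\frac dn|U_{i-1}||U_i|+\lambda\sqrt{|U_{i-1}||U_i|}$ by Theorem~\ref{thm:eml}. Dividing by $|U_i|$ gives $\lambda\sqrt{|U_{i-1}|/|U_i|}\ge d(\delta-|U_{i-1}|/n)$; as long as $|U_{i-1}|/n\le(1-\epsilon)(\delta-\lambda/d)$ the right-hand side is at least $\lambda(1+\epsilon(\delta d/\lambda-1))$, so $|U_i|\le|U_{i-1}|/(1+\epsilon(\delta d/\lambda-1))^2$. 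The hypothesis $\lambda/d<\delta/2$ makes $\delta d/\lambda-1>1$, so (taking $\epsilon\le1$ without loss of generality) $|U_i|\le(1+\epsilon)^{-2}|U_{i-1}|$ for every $i\ge2$; in particular the budget $|U_{i-1}|/n\le(1-\epsilon)(\delta-\lambda/d)$ persists for all $i$ since it holds for $|U_1|$. Therefore $|U_i|\le n(1+\epsilon)^{-2(i-1)}$, and after $O(\log(n)/\epsilon)$ rounds we reach a round with no unhappy vertices, which clears every remaining erasure; the algorithm halts.

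For the running time, observe that in any round we need only examine vertices incident to at least one erased edge: in rounds $1$ and $2$ this is at most $n$ vertices, and for $i\ge3$ every edge of $E_{i-1}\subseteq E_{i-2}$ is incident, on the side of round $i$, to $U_{i-2}$, so round $i$ examines at most $|U_{i-2}|$ vertices. Each examined vertex costs $\poly(d)$ time, so the total work is $\poly(d)\bigl(2n+\sum_{i\ge1}|U_i|\bigr)$, and $\sum_{i\ge1}|U_i|\le|U_1|\cdot\frac{(1+\epsilon)^2}{(1+\epsilon)^2-1}\le 2n/\epsilon$, yielding a running time of $n\cdot\poly(d)/\epsilon$.

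The step I expect to be the main obstacle is getting the contraction and the running-time bound to come out together: the contraction needs the Expander Mixing Lemma applied precisely to the two consecutive unhappy sets $U_{i-1},U_i$, using that after the first round every surviving erased edge lies between them, and obtaining $n\cdot\poly(d)/\epsilon$ rather than an extra $\log n$ factor needs the additional observation that once a vertex becomes happy its edges stay filled, so it is never re-examined and the per-round workload itself decays geometrically.
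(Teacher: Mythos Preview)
Your proposal is correct and follows essentially the same approach as the paper: alternate-side iterative inner decoding, with the Expander Mixing Lemma applied between consecutive ``undecodable'' vertex sets to obtain a $(1+\epsilon)^{-2}$ contraction, and a geometric-series bound on the total work to get $n\cdot\poly(d)/\epsilon$. The only cosmetic difference is bookkeeping---the paper tracks both $P_t$ (vertices still incident to an erased edge) and $B_{t-1}\subseteq P_{t-1}$ (the undecodable ones) and applies the mixing lemma to the pair $(B_{t-1},P_t)$, whereas you work directly with the undecodable sets $U_{i-1},U_i$; the resulting inequalities and constants are the same.
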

The above lemma is by now folklore, but 
for completeness, we include a proof in Appendix~\ref{app:uniquedec}.

\section{A preliminary algorithm}\label{sec:slow}

For clarity of exposition, we begin the proof of our Main Theorem~\ref{thm:main} by proving the following weaker theorem.

\begin{theorem}\label{thm:slow}
Let $\cC_0 \subseteq \F_2^d$ be a linear code with distance $\delta$ and $r$'th generalized distance $\delta_r$.  Let $G = (L \cup R,E)$ be the double cover 
 of a $d$-regular expander graph on $n$ vertices with expansion $\lambda = \max\{\lambda_2, |\lambda_n|\}$.
Let $\cC = \cC(G, \cC_0)$ be the expander code that results.
Let $\eps > 0$, and suppose that
$ \frac{\lambda}{d} \leq \frac{ \eps^2 \delta^2 }{ 2^{r+4} }. $
Let $s:=\frac{ 2^{2r+7}} {\epsilon^4 \delta^4}$.
Then there is an algorithm \textsc{\LDS} which erasure-list-decodes $\cC$ from $( 1- \eps)\delta \delta_r dn$  erasures 
with list size at most $2^{s}$ in time $n \cdot \poly(d) \cdot \exp(s)$.
\end{theorem}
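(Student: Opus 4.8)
The plan is to follow the outline in Section~\ref{sec:overview} and make the quantitative bounds explicit. First I would run, at every vertex $v \in L \cup R$ at which at most $\delta_r d$ incident edges are erased, the $O(d^3)$-time erasure list-decoder for the inner code $\cC_0$, obtaining an affine subspace $L_v$ of dimension at most $r-1$ consistent with the received symbols. At each such vertex I partition the incident (erased) edges into equivalence classes under the relation $(u,v)\sim(w,v)$ whenever the values $c_{(u,v)}$ and $c_{(w,v)}$ are related by a fixed affine function $c_{(u,v)} = b + c_{(w,v)}$ for all $c \in L_v$; since $L_v$ is an affine space of dimension $\le r-1$, there are at most $2^r$ such classes at each vertex. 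Now I would count, using the hypothesis on the number of erasures together with the Expander Mixing Lemma (Theorem~\ref{thm:eml}), how many vertices have $>\delta_r d$ erased incident edges: since the total number of erasures is at most $(1-\eps)\delta\delta_r dn$, a Markov-type argument bounds the number of such ``bad'' vertices by roughly $(1-\eps)\delta n$ per side, so the overwhelming majority of vertices get a list $L_v$.

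Next I would select $s = \frac{2^{2r+7}}{\eps^4\delta^4}$ equivalence classes greedily --- roughly, I repeatedly pick a not-yet-chosen class that touches the largest number of still-unlabeled edges, and ``label'' it by branching over its $\le 2$ possible values, propagating the forced values to all equivalent edges. Iterating over all $2^s$ joint labelings gives a candidate list of that size; the content of the theorem is that for each such labeling we can deterministically complete it to the unique consistent codeword (or detect inconsistency) in linear time. The key structural claim --- and the main obstacle --- is that after labeling these $s$ large classes, the set $U$ of still-unlabeled edges is ``spread out'': I would argue that the number of vertices $v$ with more than $\delta d$ unlabeled incident edges is small, specifically at most (say) $\frac{\eps}{2}\delta n$ on each side. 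The argument is an expander-mixing/double-counting one: if too many vertices each retained $\ge \delta d$ unlabeled edges after greedily removing the $s$ heaviest classes, then one of those classes would have had to be large, contradicting the greedy choice once $s$ is of the stated polynomial size; the exponents $2^{2r+7}/(\eps^4\delta^4)$ come from chaining the $2^r$-classes-per-vertex bound with two applications of the mixing lemma (one to control bad vertices on each side) and the $\eps^2$ slack in the erasure budget.

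Finally, given that after this step at most $\le \frac{\eps}{2}\delta n$ vertices on each side have $\ge \delta d$ unlabeled incident edges, I would run \textsc{UniqueDecode} for the inner code $\cC_0$ (Lemma with $O(d^3)$ cost) at every ``good'' vertex to fill in all of its edges, and then observe that the remaining unlabeled edges are confined to $\le \frac{\eps}{2}\delta n$ vertices on each side, hence form at most $\frac{\eps}{2}\delta \cdot dn \le (1-\eps')\delta(\delta-\lambda/d)\cdot dn$ coordinates in total (using $\lambda/d \le \eps^2\delta^2/2^{r+4}$ to absorb the $\delta-\lambda/d$ factor), so we can invoke \textsc{UniqueDecode} for the expander code $\cC$ itself (Lemma~\ref{lem:uniquedec}) to finish the labeling in time $n\cdot\poly(d)/\eps$. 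Correctness follows because every step only ever records constraints that hold for all of $\mathrm{List}_\cC(z)$, so no genuine codeword is lost, and distinct leaves of the branching tree that survive to a valid codeword are distinct; the list size is thus $\le 2^s$. The running time is $2^s$ branches times $n\cdot\poly(d)$ work per branch, i.e.\ $n\cdot\poly(d)\cdot\exp(s)$ as claimed. The step I expect to require the most care is the spreading claim after labeling $s$ classes --- getting the constants and the two-sided expander-mixing bookkeeping right so that the final residual erasure count lands below the expander code's unique-decoding radius.
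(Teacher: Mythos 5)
Your high-level plan matches the paper's (inner erasure list-decoding at all non-bad vertices, equivalence classes of edges, exhausting over $2^s$ labelings, then local followed by global unique decoding), but the central mechanism is missing or wrong in two places. First, the equivalence relation that makes the argument work is \emph{global}: it is the transitive closure of the per-vertex relations along paths in $G$, so that labeling one representative propagates across the whole graph. You only ever define the local relation at a single vertex, under which each class has at most $d$ edges, so labeling $s$ classes fixes only $O(sd)=o(nd)$ edges. Even granting the global relation implicitly, your ``greedily pick the $s$ largest classes'' selection does not support the spreading claim. The paper instead iteratively discards edges whose \emph{local} class at some endpoint has fewer than $\frac{\eps^2\delta^2}{2^{r+3}}d$ surviving members; this guarantees that every surviving global class has degree at least $\frac{\eps^2\delta^2}{2^{r+3}}d$ at \emph{every} vertex it touches, which is exactly the hypothesis needed for the Expander Mixing Lemma to force each surviving class to have $\Omega\bigl(\frac{\eps^4\delta^4}{2^{2r+7}}dn\bigr)$ edges, hence at most $s$ classes in total, \emph{all} of which get labeled. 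Under your greedy rule a top-$s$ class need not have uniformly large degree (so EML gives nothing), and the unselected classes can still collectively cover a constant fraction of the edges at many vertices; the contradiction you sketch (``one of those classes would have had to be large'') does not go through. The discarded edges are then controlled separately: per vertex there are at most $2^{r-1}$ local classes, so the total number of discarded edges is at most $2n\cdot 2^{r-1}\cdot\frac{\eps^2\delta^2}{2^{r+3}}d$, and an averaging argument bounds the vertices that lose more than $\frac{\eps}{2}\delta d$ edges this way.

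Second, your final accounting is not achievable and would not suffice even if it were. The set of vertices with more than $\delta d$ unlabeled incident edges cannot be bounded by $\frac{\eps}{2}\delta n$ per side: the bad set $B$ alone can have $(1-\eps)\delta n$ vertices per side, and one must also add the vertices adjacent to too many $B$-vertices (bounded via EML) and those losing too many edges to the pruning; the correct bound is $(1-\frac{\eps}{2})\delta n$ per side. Moreover, multiplying the number of such vertices by $d$ does not put the residual erasures below the unique-decoding radius $\approx\delta(\delta-\lambda/d)dn$ (indeed $\frac{\eps}{2}\delta dn$ can exceed $\delta^2 dn$ when $\eps>2\delta$). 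One needs a second application of the Expander Mixing Lemma to the bipartite edge set spanned by these residual vertices to conclude $|E(B')|\leq(1-\frac{\eps}{4})(\delta-\lambda/d)\delta dn$.
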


Theorem~\ref{thm:slow} still provides a linear-time algorithm (provided $d, r, \epsilon, \delta$ are all constant), but the dependence on $r$, $\eps$, $\delta$ is not very good.  We will prove Theorem~\ref{thm:slow} in this section to illustrate the main ideas, and then in Section~\ref{sec:faster}, we will show how to adapt the algorithm to achieve the running times advertised in Theorem~\ref{thm:main}.

A formal description of our algorithm \textsc{\LDS} is given in Figure~\ref{fig:alg1}. 
Roughly, the first step
is to list decode the inner codes to obtain an inner list $L_v$ at each vertex $v \in L \cup R$. The second and main step then is to label large equivalence classes by iterating over all possible assignments to representatives from these classes. In the third and final step
 we complete any such possible assignment, by first uniquely decoding at inner codes where sufficient number of edges are already labeled, followed by global unique decoding to recover the rest of the unlabeled edges. Below we elaborate on each of these steps.

In what follows, suppose that $z \in (\F_2 \cup \{\bot\})^{E}$ is a received word with at most $(1 - \eps)\delta \delta_r dn$ symbols that are $\bot$, and let $\mathcal{L} = \mathrm{List}_{\cC}(z)$ be the set of codewords of $\cC$ that are consistent with $z$.

\subsection{List decoding inner codes}\label{subsec:list_dec_inn}

The first step is to list decode all inner codes with not too many erasures.
Specifically, let $B \subseteq L \cup R$ be the set of \em bad \em vertices $v$ so that $z$ has more than $\delta_r d$ erasures incident to $v$.
\begin{equation}\label{eq:defB}
 B = \left\{ v \in L \cup R \suchthat z_{(v,u)} = \bot \text{ for more than $\delta_r d$ vertices $u$} \right\}.
\end{equation}
Then by our assumption on the number of erasures in $z$,
\[ |B \cap L| \delta_r d \leq(1 - \eps)  \delta \delta_r nd \]
and the same for $B \cap R$, which implies that
\begin{equation}\label{eq:Bsmall}
|B \cap L|,|B \cap R| \leq (1 - \eps)\delta n.
\end{equation}

The first step of the algorithm will be to list-decode the inner code $\cC_0$ at every vertex $v \not\in B$.
For all such $v$, let
\begin{equation}\label{eq:defLv}
 L_v := \mathrm{List}_{\cC_0}\left(( z_{(v, \Gamma_1(v))}, z_{(v,\Gamma_2(v))}, \ldots, z_{(v, \Gamma_d(v))} )\right).
\end{equation}

Next we shall use the following notion of \em local equivalence relation \em  to assign labels to many of the edges. 
To define this notion, note first that since $\cC_0$ has $r$'th generalized distance $\delta_r$, 
for any $v \notin B$, $L_v$ is an affine subspace of $\F_2^d$ of dimension $r_v \leq r-1$.
Let $G_v \in \F_2^{d \times r_v}$ and $b_v \in \F_2^d$ be such that
\[ L_v = \inset{ G_v x + b_v \suchthat x \in \F_2^{r_v} }. \]
Notice that each row of $G_v$ corresponds to an edge adjacent to $v$.

Next we define, for any vertex $v \notin B$, a \em local equivalence relation \em  $\sim_v$ at the vertex $v$.

\begin{definition}[Local equivalence relation]\label{def:local_equiv}
Suppose that $v \not\in B$.
For $(u,v), (w,v) \in E$, say that $(u,v) \sim_v (w,v)$ if the row of $G_v$ corresponding to $(u,v)$ is the same as the row of $G_v$ corresponding to $(w,v)$.  
\end{definition}
Notice that Definition~\ref{def:local_equiv} depends on both $v$ and $z$; we suppress the dependence on $z$ in the notation.
We make the following observations.
\begin{observation}\label{obs:ABC}
Suppose that $v \not\in B$.
\ \begin{enumerate}
\item[(A)] If $(u,v) \sim_v (w,v)$, then for any $c
\in \mathcal{L}$, $c_{(u,v)}$ is determined by $c_{(w,v)}$.
\item[(B)] There are  
at most $2^{r-1}$ local equivalence classes at $v$, because there are at most $2^{r-1}$ possible vectors in $\F_2^{r_v}$ that could appear as rows of the matrices $G_v$.  
\end{enumerate}
\end{observation}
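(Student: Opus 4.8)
The plan is to prove the two parts (A) and (B) separately; both are short consequences of the affine structure of the lists $L_v$, which (as established in the paragraph preceding the observation) is a coset $L_v = \inset{G_v x + b_v \suchthat x \in \F_2^{r_v}}$ with $r_v \le r-1$. The one point that needs a line of justification before either part is the claim that every $c \in \mathcal{L}$ restricts at $v$ to an element of $L_v$: since $c \in \mathcal{L} = \mathrm{List}_{\cC}(z)$, the vector $c$ satisfies the inner-code constraint at $v$ and agrees with $z$ on every non-erased edge incident to $v$, so the tuple $(c_{(v,\Gamma_1(v))}, \ldots, c_{(v,\Gamma_d(v))})$ lies in $\mathrm{List}_{\cC_0}$ of the corresponding restriction of $z$, which is exactly $L_v$. (Here we use only that $v \notin B$ so that this restriction has at most $\delta_r d$ erasures, so the list $L_v$ is defined.)

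For part (A), I would fix $c \in \mathcal{L}$ and write its restriction at $v$ as $G_v x + b_v$ for some $x \in \F_2^{r_v}$. Indexing rows of $G_v$ and entries of $b_v$ by the edges incident to $v$, this reads $c_{(u,v)} = (G_v x)_{(u,v)} + (b_v)_{(u,v)}$ and $c_{(w,v)} = (G_v x)_{(w,v)} + (b_v)_{(w,v)}$. Since $(u,v) \sim_v (w,v)$ means the rows of $G_v$ indexed by $(u,v)$ and $(w,v)$ coincide, we have $(G_v x)_{(u,v)} = (G_v x)_{(w,v)}$, and subtracting (over $\F_2$) gives $c_{(u,v)} = c_{(w,v)} + (b_v)_{(u,v)} + (b_v)_{(w,v)}$. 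The additive constant $(b_v)_{(u,v)} + (b_v)_{(w,v)}$ depends only on $v$ and $z$ (not on $c$), so $c_{(u,v)}$ is determined by $c_{(w,v)}$; I would also remark that this constant is read off directly from $b_v$, which is what the later label-propagation steps use.

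For part (B), by Definition~\ref{def:local_equiv} two edges at $v$ are $\sim_v$-equivalent precisely when they index the same row of $G_v$, so the number of local equivalence classes at $v$ equals the number of distinct rows of $G_v$. Every row is a vector in $\F_2^{r_v}$, and $r_v \le r-1$, so there are at most $2^{r_v} \le 2^{r-1}$ distinct rows, hence at most $2^{r-1}$ classes.

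There is no real obstacle here: the statement is essentially a repackaging of "$L_v$ is an affine subspace of small dimension.'' The only things to be careful about are (i) spelling out why $c|_v \in L_v$ for every $c \in \mathcal{L}$, and (ii) keeping track that the relation between $c_{(u,v)}$ and $c_{(w,v)}$ is affine rather than equality, since $b_v$ need not be zero on these coordinates.
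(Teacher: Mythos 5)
Your proof is correct and follows exactly the argument the paper intends (the paper states this as an observation with only the row-counting remark in (B), and your write-up simply fills in the routine details: $c|_{\Gamma(v)} \in L_v$ because $v \notin B$, the affine relation $c_{(u,v)} = c_{(w,v)} + (b_v)_{(u,v)} + (b_v)_{(w,v)}$ from equal rows of $G_v$, and the bound $2^{r_v} \le 2^{r-1}$ on distinct rows). No issues.
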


\subsection{Labeling large equivalence classes}\label{subsec:label_equiv}

The next step is to assign labels to large \em global \em equivalence classes, defined below.
For this, we first define a new edge set $E' \subseteq E$ by first throwing out all edges touching $B$, and then repeatedly throwing out  edges whose local equivalence classes are too small. Specifically, define $E'$ to be the output of the following Algorithm \textsc{FindHeavyEdges}, given in Figure \ref{algo:fhe}.

\begin{figure}
\centering
\fbox{
\begin{tabular}{cc}
\begin{minipage}{.9\textwidth}
\vspace{.3cm}
\textbf{Algorithm:} \textsc{FindHeavyEdges}

\textbf{Inputs:} A description of $G = (L \cup R,E)$ and $\cC_0 \subseteq \F_2^d$, and the lists $L_v$ for $v \not\in B$.

\textbf{Output:} The set $E' \subseteq E$. 

\bigskip

\textbf{Initialize:} $E' \gets E $.

\begin{enumerate}
\item Remove from $E'$ all edges incident to a vertex in $B$.
\item While true:\\ \\
 If there is some $(u,v) \in  E'$, so that 
 $$\left|\left\{ (w,v) \in E' \suchthat (w,v) \sim_v (u,v) \right\}\right| \leq \frac{ \eps^2 \delta^2  }{ 2^{r+3}} \cdot d,$$
remove $(u,v)$ and all edges $(w,v) \in E'$ so that  $(w,v) \sim_v (u,v)$ from $E' $.
	\item  Break and return the set $E'$.
\end{enumerate}
\end{minipage}
\vspace{.3cm}
&\hspace{.3cm}
\end{tabular}
}
\caption{\textsc{FindHeavyEdges} }
\label{algo:fhe}
\end{figure}

Next we define a \em global \em equivalence relation $\sim$ on the edges in $E'$ as follows.
\begin{definition}[Global equivalence relation]\label{def:global_equiv}
Suppose that $e, e' \in E'$. We say that $e \sim e'$ if there is a path $e=e_1, e_2, \ldots, e_t = e'$
so that $e_1, e_2, \ldots,e_t \in E'$, and for any pair of adjacent edges  $e_i = (u,v)$, $e_{i+1} = (v,w)$ on the path it holds that $(u,v) \sim_v (v,w)$.
\end{definition}

The following lemma shows that  $ E'$ is partitioned into a small number of large global equivalence classes. Consequently, one can assign labels to all edges in $ E'$ by iterating over all possible assignments for a small number of representatives from these classes.

\begin{lemma}\label{lem:equiv_1}
Any global equivalence class in $E'$ has size at least  $\frac{ \eps^4 \delta^4  }{ 2^{2r+7}} dn$. In particular, $E'$ is partitioned into at most $s:=\frac{ 2^{2r+7}} {\epsilon^4 \delta^4}$ different equivalence classes.
\end{lemma}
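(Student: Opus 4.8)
The plan is to lower-bound the size of an arbitrary global equivalence class $A \subseteq E'$, and then divide into $|E'| \le dn$ to bound the number of classes. Let $A$ be such a class, and for each side let $S = \{v \in L : A \text{ contains an edge incident to } v\}$ and $T = \{v \in R : \ldots\}$. The key structural fact is that, because $A$ is a connected component under the relation generated by the local equivalences, every edge of $A$ incident to a given vertex $v$ lies in a single local equivalence class at $v$ (if two edges of $A$ at $v$ were in different local classes at $v$, they would still be $\sim$-related through $A$, but that doesn't immediately contradict anything — so more care is needed here; see the obstacle paragraph). Granting the right version of this, every vertex $v \in S \cup T$ that $A$ touches is touched in at least one local class at $v$, and by the termination condition of \textsc{FindHeavyEdges}, every surviving local class has size $> \tfrac{\eps^2\delta^2}{2^{r+3}} d$. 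Hence $|A| \ge \tfrac{\eps^2\delta^2}{2^{r+3}} d \cdot \max\{|S|, |T|\}$, or perhaps $|A| \ge \tfrac{\eps^2\delta^2}{2^{r+3}} d \cdot |S|$ counting from the left only.

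The second ingredient is an expansion argument showing $S$ and $T$ cannot be too small. Each edge of $A$ incident to $v \in S$ lies in $E'$, and $E'$ avoids $B$, so $v \notin B$; also $A$ touches $v$ in a local class of size $> \tfrac{\eps^2\delta^2}{2^{r+3}} d$, and all those edges go into $T$. Thus $|E(S,T)| \ge |A| \ge \tfrac{\eps^2\delta^2}{2^{r+3}} d \cdot |S|$ (and symmetrically with $T$). Applying the Expander Mixing Lemma (Theorem~\ref{thm:eml}),
\[
\frac{\eps^2\delta^2}{2^{r+3}} d\,|S| \;\le\; |E(S,T)| \;\le\; \frac{d}{n}|S||T| + \lambda\sqrt{|S||T|}.
\]
Using the hypothesis $\lambda/d \le \tfrac{\eps^2\delta^2}{2^{r+4}}$, the error term $\lambda\sqrt{|S||T|} \le \tfrac{\eps^2\delta^2}{2^{r+4}} d \sqrt{|S||T|}$ is at most half of the left-hand side whenever $\sqrt{|S||T|} \le \ldots$, which after rearranging forces $|T| \ge \tfrac{\eps^2\delta^2}{2^{r+4}} n$ (and symmetrically $|S| \ge \tfrac{\eps^2\delta^2}{2^{r+4}} n$). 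Plugging $|T| \ge \tfrac{\eps^2\delta^2}{2^{r+4}} n$ back into $|A| \ge \tfrac{\eps^2\delta^2}{2^{r+3}} d\,|T|$ (counting edges of $A$ from the $R$-side) gives $|A| \ge \tfrac{\eps^4\delta^4}{2^{2r+7}} dn$, as claimed. The bound on the number of classes is then immediate: since the classes partition $E'$ and $|E'| \le |E| = dn$, there are at most $dn / (\tfrac{\eps^4\delta^4}{2^{2r+7}} dn) = \tfrac{2^{2r+7}}{\eps^4\delta^4} = s$ of them.

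The main obstacle is the first structural step: relating the global class $A$ to local-class sizes. One has to be careful that when $A$ reaches a vertex $v$ it does so "through a heavy local class" — i.e., that the edges of $A$ at $v$ include a full surviving local class at $v$, not just a stray edge. The cleanest way is to observe that $\sim$ is defined so that if $(u,v) \in E'$ and $(w,v) \in E'$ with $(u,v)\sim_v(w,v)$, then $(u,v) \sim (w,v)$ globally (a length-$2$ path); hence each global class, restricted to the edges at a vertex $v$, is a union of entire surviving local classes at $v$. Since \textsc{FindHeavyEdges} has terminated, each such surviving local class has more than $\tfrac{\eps^2\delta^2}{2^{r+3}}d$ edges, giving the per-vertex lower bound and closing the gap. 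I would also double-check the edge case where $A$ touches $S$ but $T=\emptyset$ (or vice versa): this cannot happen, since any edge of $A$ has an endpoint on each side, so $S,T \ne \emptyset$ and the EML argument applies. Finally, one should verify the arithmetic in the EML step makes the claimed constants work out, i.e. that $\tfrac{\eps^2\delta^2}{2^{r+4}}$ rather than $\tfrac{\eps^2\delta^2}{2^{r+3}}$ in the hypothesis is what is needed to absorb the error term — this is a routine computation I would not belabor.
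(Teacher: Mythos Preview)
Your proposal is correct and follows essentially the same argument as the paper: identify the left/right vertex sets $S,T$ touched by a global class, use the \textsc{FindHeavyEdges} termination condition to get a per-vertex degree lower bound of $\tfrac{\eps^2\delta^2}{2^{r+3}}d$ into the class, and feed this into the Expander Mixing Lemma together with $\lambda/d \le \tfrac{\eps^2\delta^2}{2^{r+4}}$. The paper streamlines your somewhat hand-wavy EML arithmetic by symmetrizing: from $|F| \ge \tfrac{\eps^2\delta^2}{2^{r+3}} d \max\{|S|,|T|\} \ge \tfrac{\eps^2\delta^2}{2^{r+3}} d\sqrt{|S||T|}$ and the EML upper bound, one divides through by $\sqrt{|S||T|}$ to get $\sqrt{|S||T|} \ge n\bigl(\tfrac{\eps^2\delta^2}{2^{r+3}} - \tfrac{\lambda}{d}\bigr)$ directly, avoiding the need for a WLOG on which of $|S|,|T|$ is larger.
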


\begin{proof} 
Let $F$ be a global equivalence class in $E'$, and let $S\subseteq L$ and $T \subseteq R$ denote the left and right 
vertices touching $F$, respectively.
By the definition of $E'$, any vertex $v \in  S \cup T$ is incident to at least $\frac{\eps^2 \delta^2} {2^{r+3}} \cdot d$ edges in $F$. 
Thus by the Expander Mixing Lemma (Theorem \ref{thm:eml}), 
$$
\frac{\eps^2 \delta^2} {2^{r+3}}d\sqrt{|S| |T| }
\leq \left|F\right| 
\leq \frac{d}{n} |S| |T| + \lambda \sqrt{|S| |T| },
$$
and rearranging
\[ n \inparen{ \frac{\eps^2 \delta^2} {2^{r+3}} - \frac{ \lambda }{d} } \leq \sqrt{ |S| |T|}.\]
This implies in turn that 
$$|F| \geq  \frac{\eps^2 \delta^2} {2^{r+3}} d \sqrt{|S| |T| } \geq \frac{\eps^2 \delta^2} {2^{r+3}}  \left( \frac{\eps^2 \delta^2} {2^{r+3}}  - \frac \lambda d\right) dn,$$
which gives the final claim
by our choice of $\frac \lambda d \leq \frac {\epsilon^2 \delta^2} {2^{r+4}}$.
 \end{proof}
 
Finally, by (A) in Observation~\ref{obs:ABC}, choosing a symbol on an edge determines all the symbols in that edge's equivalence class.  Thus, we will exhaust over all choices of symbols for the equivalence classes in $E'$; this leads to $2^s$ possibilities. Next we show that any such choice determines a unique codeword in $\cC$.

\subsection{Completing the assignment}\label{subsec:complete}

To complete the assignment we first show that many of the vertices
have at least $(1-\delta)d$ incident edges in $E'$. For any such vertex, the inner codeword at this vertex is completely determined by the assignment to edges in $E'$, and so can be recovered by uniquely decoding locally at this vertex. We then recover the small number of remaining edges using global unique decoding. Specifically, let 
\begin{equation}\label{eq:defB'}
 B' = \left\{ v \in L \cup R \suchthat (v,u) \notin E'  \text{ for more than $\delta d$ vertices $u$} \right\}.
\end{equation}
The next lemma bounds the size of $B'$, and the number of edges in $E(B')$.

\begin{lemma}\label{lem:equiv_2}
The following hold:
\begin{enumerate}
\item $|B' \cap L|, |B' \cap R| \leq \left(1-\frac \epsilon 2 \right) \delta n$.
\item $|E(B')| \leq \left(1-\frac \epsilon 4\right) \left(\delta - \frac \lambda d\right)\delta nd$.
\end{enumerate}
\end{lemma}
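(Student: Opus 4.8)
The plan is to bound each of the two quantities by carefully accounting for where edges can fail to land in $E'$, and then invoking the Expander Mixing Lemma for the second part. Recall that an edge $(u,v)$ is removed from $E'$ for one of two reasons: either it touches a bad vertex (in $B$), or at some point in \textsc{FindHeavyEdges} its local equivalence class at one of its endpoints became small (of size at most $\frac{\eps^2\delta^2}{2^{r+3}}d$). So for any vertex $v$, the edges incident to $v$ that are not in $E'$ are exactly: edges going to $B$, plus edges whose local class at $v$ was pruned, plus edges $(u,v)$ whose local class at $u$ was pruned. I will bound the number of incident non-$E'$ edges of each type.

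\textbf{Part 1.} First I would bound the number of edges incident to $v$ that lie in small local equivalence classes at $v$ itself. By (B) of Observation~\ref{obs:ABC} there are at most $2^{r-1}$ local equivalence classes at $v$, and each pruned one contributes at most $\frac{\eps^2\delta^2}{2^{r+3}}d$ edges, so in total at most $2^{r-1}\cdot\frac{\eps^2\delta^2}{2^{r+3}}d = \frac{\eps^2\delta^2}{16}d$ edges are lost to this cause. (One must be slightly careful that classes only shrink over the course of \textsc{FindHeavyEdges}, so the size bound at the time of pruning is a valid upper bound on how many incident edges are removed; this is immediate since we only ever delete edges.) For $v\notin B$, the remaining reasons an incident edge $(u,v)$ is not in $E'$ are that $u\in B$ or that the class of $(u,v)$ at $u$ was pruned. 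If $v$ is such that more than $\delta d$ incident edges are missing from $E'$, then since at most $\frac{\eps^2\delta^2}{16}d$ of those are due to pruning at $v$, more than $(\delta - \frac{\eps^2\delta^2}{16})d$ of them go to a vertex in $B$ or were pruned at the far endpoint. Now I would double-count: summing over all $v\in B'\cap L$, the total number of edges from $B'\cap L$ to $B\cap R$ is at most $d|B\cap R|\le d(1-\eps)\delta n$ by \eqref{eq:Bsmall}. For edges pruned at the far endpoint $u\in R$: each such $u$ loses at most $\frac{\eps^2\delta^2}{16}d$ edges to pruning-at-$u$, so the total over all $u\in R$ is at most $\frac{\eps^2\delta^2}{16}dn$. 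Hence
\[
|B'\cap L|\cdot\Bigl(\delta - \frac{\eps^2\delta^2}{16}\Bigr)d \;\le\; (1-\eps)\delta n d + \frac{\eps^2\delta^2}{16}nd,
\]
and solving for $|B'\cap L|$ and simplifying (using that $\eps,\delta\le 1$, so the correction terms are lower-order) should give $|B'\cap L|\le (1-\tfrac{\eps}{2})\delta n$; the same argument applies to $B'\cap R$.

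\textbf{Part 2.} For the bound on $|E(B')|$, I would apply the Expander Mixing Lemma (Theorem~\ref{thm:eml}) to $S=B'\cap L$ and $T=B'\cap R$: $|E(B')|\le \frac{d}{n}|S||T| + \lambda\sqrt{|S||T|}$. Plugging in $|S|,|T|\le(1-\tfrac{\eps}{2})\delta n$ from Part 1 gives $|E(B')|\le (1-\tfrac{\eps}{2})^2\delta^2 nd + \lambda(1-\tfrac{\eps}{2})\delta n \le (1-\tfrac{\eps}{2})\delta n d\bigl((1-\tfrac{\eps}{2})\delta + \tfrac{\lambda}{d}\bigr)$. Using the hypothesis $\tfrac{\lambda}{d}\le\tfrac{\eps^2\delta^2}{2^{r+4}}\le\tfrac{\eps^2\delta}{16}$ (since $\delta\le 1$ and $r\ge 1$) and the elementary inequality $(1-\tfrac{\eps}{2})^2\delta + (1-\tfrac{\eps}{2})\tfrac{\eps^2\delta}{16}\le(1-\tfrac{\eps}{4})(\delta-\tfrac{\lambda}{d})$ — which needs a short verification comparing the $\eps$-dependent coefficients, again exploiting $\tfrac{\lambda}{d}$ being much smaller than $\eps\delta$ — yields $|E(B')|\le(1-\tfrac{\eps}{4})(\delta-\tfrac{\lambda}{d})\delta nd$, as claimed.

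\textbf{Main obstacle.} The routine-but-delicate part is the bookkeeping in Part 1: making sure that the "at most $\frac{\eps^2\delta^2}{16}d$ incident edges lost to local pruning" bound is applied correctly at both endpoints without double-counting, and that the double-counting argument for edges going into $B$ versus edges pruned at far endpoints is set up so the constants actually close to give the factor $(1-\tfrac{\eps}{2})$ rather than something weaker. The Expander Mixing Lemma step in Part 2 is then mechanical, modulo chasing the $\eps$ and $\lambda/d$ terms through the inequality; the hypothesis on $\lambda/d$ is comfortably strong enough to absorb all the slack.
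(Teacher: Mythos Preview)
Your Part~2 is correct and is exactly the paper's argument: apply the Expander Mixing Lemma to $B'\cap L$ and $B'\cap R$, then absorb the $\lambda/d$ term using the hypothesis.

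Part~1, however, has a genuine gap. Solve your displayed inequality
\[
|B'\cap L|\cdot\Bigl(\delta - \tfrac{\eps^2\delta^2}{16}\Bigr)d \;\le\; (1-\eps)\delta n d + \tfrac{\eps^2\delta^2}{16}nd
\]
and you will see that the $\delta$'s cancel: dividing through by $(\delta - \tfrac{\eps^2\delta^2}{16})d\approx \delta d$ yields only $|B'\cap L|\lesssim (1-\eps)\,n$, not $(1-\tfrac{\eps}{2})\delta n$. The trivial degree bound $d|B\cap R|$ on the number of edges entering $B\cap R$ is too weak by a full factor of $\delta$. (There is a second, smaller issue: your per-vertex lower bound ``more than $(\delta-\tfrac{\eps^2\delta^2}{16})d$ missing incident edges go to $B$ or were pruned at the far endpoint'' is only justified for $v\notin B$, so the left side should really carry $|(B'\cap L)\setminus B|$; but fixing this does not rescue the argument, since you would still only bound $|(B'\cap L)\setminus B|$ by something of order $n$.)

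The missing ingredient is that the Expander Mixing Lemma must already be invoked in Part~1, not only in Part~2. The paper decomposes $B'\subseteq B\cup B_1\cup B_2$, where $B_1$ is the set of $v\notin B$ with more than $(1-\tfrac{\eps}{2})\delta d$ neighbors in $B$, and $B_2$ is the set of $v\notin B$ with more than $\tfrac{\eps}{2}\delta d$ incident edges removed in Step~2 of \textsc{FindHeavyEdges}. Your total-edges-removed count is precisely what bounds $|B_2|\le \tfrac{\eps}{4}\delta n$. But bounding $|B_1|$ genuinely requires expansion: applying the Expander Mixing Lemma to $B_1\cap L$ and $B\cap R$ shows that very few vertices can have a $\Theta(\delta)$-fraction of their neighbors land in a set of size $(1-\eps)\delta n$, giving $|B_1|\le \tfrac{\eps}{4}\delta n$. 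Combined with $|B\cap L|\le(1-\eps)\delta n$ from \eqref{eq:Bsmall}, this yields the claimed $(1-\tfrac{\eps}{2})\delta n$.
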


\begin{proof}
For the first item, let $B_1 \subseteq (L \cup R)\setminus B$ be the subset of vertices $ v \notin  B$ so that more than $\left(1- \frac \epsilon 2 \right) \delta d$ edges incident to $v$ are removed on Step 1 of \textsc{FindHeavyEdges}, and let $B_2 \subseteq (L \cup R)\setminus B$ be the subset of vertices $ v \notin B$ so that more than $\frac \epsilon 2 \delta d$ edges incident to $v$ are removed on Step 2 of \textsc{FindHeavyEdges}. 
Note that $B' \subseteq   B \cup B_1 \cup B_2$,
so it suffices to show that $|(B \cup B_1 \cup B_2) \cap L| \leq \left(1-\frac \epsilon 2 \right) \delta n$, and similarly for $R$. 
By (\ref{eq:Bsmall}), $|B \cap L|, |B \cap R| \leq (1-\epsilon) \delta n$. Claims \ref{clm:remove_step_1} and \ref{clm:remove_step_2} below show that 
each of  the sets $B_1, B_2$ has size at most $\frac \epsilon 4 \delta n$ which gives the desired conclusion.

For the second item, note that by the first item and the Expander Mixing Lemma,
\begin{align*}
\left|E(B')\right| & \leq \frac{d}{n} \left( \delta n \left(1 - \frac \eps 2\right) \right)^2 + \lambda \delta n \left(1 - \frac \eps 2\right) \\
&\leq  \left(1 - \frac \eps 2\right)\left( \delta + \frac \lambda d \right)\delta nd \\
& \leq\left(1-\frac \epsilon 4\right) \left(\delta - \frac \lambda d\right)\delta nd,
\end{align*}
where the last inequality follows by our choice of  $\frac{\lambda} {d} \leq \frac{ \epsilon \delta} {8}$. 
\end{proof}

\begin{claim}\label{clm:remove_step_1}
$|B_1| \leq \frac \epsilon 4 \delta n.$
\end{claim}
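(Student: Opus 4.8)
The plan is to unwind what Step~1 of \textsc{FindHeavyEdges} (Figure~\ref{algo:fhe}) actually does and then apply the Expander Mixing Lemma. Step~1 removes from $E'$ precisely the edges having at least one endpoint in $B$. Hence, for a vertex $v \notin B$, the number of edges incident to $v$ that are deleted on Step~1 is exactly $|\Gamma(v) \cap B|$, the number of neighbors of $v$ lying in $B$. So by definition
\[
B_1 = \inset{ v \in (L \cup R)\setminus B \suchthat |\Gamma(v) \cap B| > \inparen{1 - \tfrac{\eps}{2}} \delta d }.
\]
I will bound $|B_1 \cap L|$ and $|B_1 \cap R|$ separately and add; by symmetry it suffices to treat $B_1 \cap L$.

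Every vertex of $B_1 \cap L$ sends more than $(1-\eps/2)\delta d$ edges to vertices of $B \cap R$, so $|E(B_1 \cap L, B \cap R)| \geq |B_1 \cap L|\,(1-\eps/2)\delta d$. On the other hand, the Expander Mixing Lemma (Theorem~\ref{thm:eml}) gives
\[
|E(B_1 \cap L, B \cap R)| \leq \tfrac{d}{n}\,|B_1 \cap L|\,|B \cap R| + \lambda \sqrt{|B_1 \cap L|\,|B \cap R|}.
\]
If $B_1 \cap L = \emptyset$ there is nothing to prove; otherwise I divide both inequalities' common bound through by $|B_1 \cap L|\cdot d$, obtaining
\[
\inparen{1 - \tfrac{\eps}{2}}\delta \leq \frac{|B \cap R|}{n} + \frac{\lambda}{d}\sqrt{\frac{|B \cap R|}{|B_1 \cap L|}}.
\]
Now I invoke (\ref{eq:Bsmall}), which says $|B \cap R| \leq (1-\eps)\delta n$, so the first term on the right is at most $(1-\eps)\delta$; subtracting it leaves $\tfrac{\eps}{2}\delta \leq \tfrac{\lambda}{d}\sqrt{|B \cap R|/|B_1 \cap L|}$, i.e.
\[
|B_1 \cap L| \leq \frac{4(\lambda/d)^2}{\eps^2\delta^2}\,|B \cap R| \leq \frac{4(\lambda/d)^2}{\eps^2\delta^2}\,\delta n.
\]

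Finally I plug in the hypothesis $\tfrac{\lambda}{d} \leq \tfrac{\eps^2\delta^2}{2^{r+4}} \leq \tfrac{\eps^2\delta^2}{32}$ (using $r \geq 1$), which yields $(\lambda/d)^2 \leq \eps^4\delta^4/1024$ and hence $|B_1 \cap L| \leq \tfrac{\eps^2\delta^2}{256}\,\delta n \leq \tfrac{\eps}{256}\,\delta n$ since $\eps,\delta \leq 1$. The identical argument with the roles of $L$ and $R$ swapped gives $|B_1 \cap R| \leq \tfrac{\eps}{256}\,\delta n$, so $|B_1| \leq \tfrac{\eps}{128}\,\delta n \leq \tfrac{\eps}{4}\,\delta n$, as claimed. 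There is no real obstacle here: the only thing to be careful about is that the Expander Mixing Lemma error term $\tfrac{\lambda}{d}\sqrt{|B\cap R|/|B_1\cap L|}$ is genuinely dominated, which is exactly where the assumed bound on $\lambda/d$ — quadratically small in $\eps\delta$ — does the work, with plenty of slack left over.
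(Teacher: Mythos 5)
Your proof is correct and follows essentially the same route as the paper: characterize $B_1$ as the vertices with more than $(1-\eps/2)\delta d$ neighbors in $B$, double-count $E(B_1\cap L, B\cap R)$ via the Expander Mixing Lemma, and absorb the main term using \eqref{eq:Bsmall} to isolate the $\lambda$-error term. The only difference is cosmetic: you substitute the hypothesis $\lambda/d \le \eps^2\delta^2/2^{r+4}$ directly into the resulting bound $|B_1\cap L| \le 4(\lambda/d)^2 n/(\eps^2\delta)$, whereas the paper routes through the intermediate consequence $\lambda/d \le \eps^{3/2}\delta/8$; both land comfortably under $\tfrac{\eps}{8}\delta n$ per side.
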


\begin{proof}
By the description of \textsc{FindHeavyEdges},
$B_1$ is the set of all vertices $v \in (L \cup R) \setminus B$ that are incident to more than $\left(1- \frac \epsilon 2 \right) \delta d$ vertices of $B$.
Thus by the Expander Mixing Lemma,
\begin{align*}
|B_1\cap L|  \left(1- \frac \epsilon 2 \right) \delta  d&\leq \left|E(B_1\cap L, B\cap R)\right| \\
&\leq \frac{d}{n} |B_1\cap L||B\cap R| + \lambda \sqrt{ |B_1\cap L||B\cap R| } \\
&\leq \frac{d}{n} |B_1\cap L| n\delta(1 - \eps) + \lambda \sqrt{ |B_1\cap L| n\delta (1 - \eps) },
\end{align*}
where the last inequality follows by (\ref{eq:Bsmall}).

Rearranging, we have
$$
\sqrt{|B_1\cap L|} \leq \frac{ \lambda \sqrt{ n \delta (1 - \eps) } }{ d\delta \eps/2 },
$$
and
$$
|B_1 \cap L| \leq 4n \inparen{ \frac{\lambda}{d} }^2 \cdot \frac{ 1 }{\delta \eps^2 } \leq  \frac{\epsilon } {8} \delta n,
$$
where the last inequality follows by our choice of $\frac \lambda d \leq \frac{\eps^{3/2} \delta} {8}$.
As the same holds for $B_1\cap R$, we conclude that $B_1$ has size at most $\frac{\epsilon} {4} \delta n$.
\end{proof}

\begin{claim}\label{clm:remove_step_2}
$|B_2| \leq \frac \epsilon 4 \delta n.$
\end{claim}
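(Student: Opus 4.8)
The plan is to mirror the counting argument of Claim~\ref{clm:remove_step_1}, but now bounding the \emph{total} number of edges deleted in Step 2 of \textsc{FindHeavyEdges} directly, rather than going through an intermediate bipartite-incidence count. First I would observe that every edge removed in Step 2 is removed as part of a light local equivalence class: whenever we delete some $(u,v)$ together with its local-equivalence class at $v$, that entire class has size at most $\frac{\eps^2\delta^2}{2^{r+3}}d$. By Observation~\ref{obs:ABC}(B), there are at most $2^{r-1}$ local equivalence classes at any vertex $v\notin B$. Therefore the total number of edges incident to a fixed vertex $v$ that can ever be removed in Step 2 is at most $2^{r-1}\cdot \frac{\eps^2\delta^2}{2^{r+3}}d = \frac{\eps^2\delta^2}{16}d$. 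In particular, once this many edges incident to $v$ have been removed in Step 2, no further Step-2 removals can occur at $v$, since there are no light classes left.

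Next I would compare this per-vertex cap with the threshold defining $B_2$. The vertex $v$ lies in $B_2$ only if more than $\frac{\eps}{2}\delta d$ of its incident edges are removed in Step 2. But we have just shown that at most $\frac{\eps^2\delta^2}{16}d$ edges incident to $v$ are ever removed in Step 2. Under the hypothesis $\frac{\lambda}{d}\le \frac{\eps^2\delta^2}{2^{r+4}}$ we certainly have $\eps\le 1$ and $\delta\le 1$, so $\frac{\eps^2\delta^2}{16}\le \frac{\eps\delta}{16} < \frac{\eps\delta}{2}$. Hence \emph{no} vertex can satisfy the defining condition of $B_2$, and in fact $B_2=\varnothing$. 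This gives $|B_2|=0\le \frac{\eps}{4}\delta n$, which is stronger than the claimed bound.

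Alternatively, if one prefers an argument that does not rely on the precise constants lining up so cleanly, I would instead bound $|B_2|$ by a global edge count: the total number of edges removed in Step 2 over all vertices is at most $\frac{\eps^2\delta^2}{16}\cdot nd$ (summing the per-vertex cap over the $n$ vertices on each side, or more simply noting each edge is counted at one endpoint when removed). Each vertex in $B_2$ accounts for more than $\frac{\eps}{2}\delta d$ such removed edges, so $|B_2|\cdot \frac{\eps}{2}\delta d \le \frac{\eps^2\delta^2}{16}nd$, whence $|B_2|\le \frac{\eps\delta}{8}n \le \frac{\eps}{4}\delta n$. This is the form of the argument most parallel to Claim~\ref{clm:remove_step_1}.

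The only subtle point — and the one I would be most careful about — is justifying the per-vertex cap of $\frac{\eps^2\delta^2}{16}d$ on Step-2 removals at a single vertex. One must check that the \textsc{while}-loop in \textsc{FindHeavyEdges} cannot ``revisit'' a vertex to remove a class that had previously grown back: but the algorithm only ever removes edges, never adds them, so the set $E'$ restricted to the edges at $v$ is non-increasing, each Step-2 removal at $v$ eliminates a full current local-equivalence class of size $\le \frac{\eps^2\delta^2}{2^{r+3}}d$, and there are at most $2^{r-1}$ such classes in total, so at most $2^{r-1}$ removal events at $v$ can occur, each of bounded size. Multiplying gives the cap, and the rest is the elementary comparison above.
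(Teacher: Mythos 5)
Your primary argument --- the per-vertex cap of $2^{r-1}\cdot \frac{\eps^2\delta^2}{2^{r+3}}d$ leading to $B_2=\varnothing$ --- has a genuine gap. The set $B_2$ counts vertices $v$ for which many \emph{edges incident to $v$} are removed in Step 2, and an edge $(u,v)$ can be removed in Step 2 because its local equivalence class at the \emph{other} endpoint $u$ (under $\sim_u$) is light, not only because of a light class at $v$. Your cap only controls removals \emph{triggered at} $v$ (via the $\le 2^{r-1}$ classes of $\sim_v$); it says nothing about the up to $d$ distinct neighbors $u$ of $v$, each of which may independently delete $(u,v)$ as part of a light class of $\sim_u$. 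Indeed, the decomposition $B'\subseteq B\cup B_1\cup B_2$ in the proof of Lemma~\ref{lem:equiv_2} forces this broad reading of ``removed on Step 2,'' so $B_2$ need not be empty and the subtle point is not the one you flagged (the monotonicity of $E'$) but the attribution of each removal to a single triggering endpoint.

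Your ``alternative'' argument is the paper's actual proof and is the one to keep: bound the \emph{total} number of edges removed in Step 2 by (number of triggering events) $\times$ (class-size threshold), then average. Two bookkeeping corrections are needed. First, removal events can be triggered at any of the $2n$ vertices, so the total is at most $2n\cdot 2^{r-1}\cdot\frac{\eps^2\delta^2}{2^{r+3}}d=\frac{\eps^2\delta^2}{8}nd$, not $\frac{\eps^2\delta^2}{16}nd$. Second, when averaging, each removed edge is incident to one vertex of $L$ and one of $R$, so it can contribute to the tallies of two vertices of $B_2$; doing the averaging separately on each side (each removed edge has exactly one endpoint in $L$) gives $|B_2\cap L|\cdot\frac{\eps}{2}\delta d\le \frac{\eps^2\delta^2}{8}nd$, i.e.\ $|B_2\cap L|\le\frac{\eps}{4}\delta n$, and likewise for $R$, which is what Lemma~\ref{lem:equiv_2} actually uses.
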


\begin{proof}
Since there are at most $2^{r-1}$ local equivalence classes at each vertex $v$, the algorithm \textsc{FindHeavyEdges} performs at most
 $2n \cdot 2^{r-1}$ iterations at Step 2. At each such iteration, at most $\frac{\epsilon^2 \delta^2 } {2^{r+3}}\cdot d$ edges are removed, and so the total number of edges removed at Step 2 of  \textsc{FindHeavyEdges} is  $2n \cdot 2^{r-1} \cdot \frac{\epsilon^2 \delta^2 } {2^{r+3}}\cdot d = \frac{\epsilon^2 \delta^2 } {8}\cdot d n$. Finally, by averaging this implies that there are at most $\frac \epsilon 4 \delta n$ vertices $v$ so that more than $
 \frac \epsilon 2 \delta d$ edges incident to $v$ are removed at this step.
\end{proof}

Next observe that for any vertex $v \not\in B'$, the choices for symbols on $E'$ uniquely determine the codeword of $\cC_0$ that belongs at the vertex $v$.  This is because $\cC_0$ has distance $\delta$, and at least $(1 - \delta)d$ edges incident to $v$ have been labeled. Note that since $\cC_0$ is a linear code of length $d$, this unique codeword can be found in time $O(d^3)$ by solving a system of linear equations.
Once this is done, the only edges that do not have labels are those in $E(B')$. By Item (2) of Lemma \ref{lem:equiv_2}, there are at most $\left(1-\frac \epsilon 4\right) \left(\delta - \frac \lambda d\right)\delta nd$ such edges.
By Lemma \ref{lem:uniquedec}, these edges can be recovered using global unique decoding in time $n \cdot \poly(d) /\epsilon$.  In this way, we can recover the entire list $\mathcal{L}$.

\bigskip

The algorithm described above is given as
\textsc{\LDS} in Figure~\ref{fig:alg1}.  This algorithm runs in time $n \cdot \poly(d) \cdot \exp(s)$, 
which proves
Theorem~\ref{thm:slow}.  We will show how to speed it up in Section~\ref{sec:faster}, where we will conclude the proof of Theorem~\ref{thm:main}.

\begin{figure}
\centering
\fbox{
\begin{tabular}{cc}
\begin{minipage}[c][]{.9\textwidth}
\vspace{.3cm}
\textbf{Algorithm:} \textsc{\LDS}

\textbf{Inputs:} A description of $G = (L \cup R,E)$ and $\cC_0 \subseteq \F_2^d$, and $z \in (\F_2 \cup \{\bot\})^E$.

\textbf{Output:} The list $\mathcal{L} = \mathrm{List}_{\cC}(z)$. 

\bigskip

\textbf{Initialize:} $\mathcal{L} = \emptyset$.

\begin{enumerate}
	\item Let $B \subseteq L \cup R$ be as  in \eqref{eq:defB}. 
	For each $v \not\in B$, run $\cC_0$'s erasure list-decoding algorithm to obtain the lists $L_v$ as in \eqref{eq:defLv}.  For each $v$, this entails finding the kernel of a sub-matrix of $G_v$, which can be done in time $O(d^3)$.  Thus, the time for this step is $n \cdot \poly(d)$.  
		\item Run the algorithm \textsc{FindHeavyEdges} given in Figure \ref{algo:fhe} to find the set $E'$, find the partition of $E'$ into $s$ global equivalence classes, and 
choose representative edges $e^{(1)}, \ldots, e^{(s)}$ from each of the equivalence classes. This can be done in time $O(nd)$ using Breadth-First-Search.
	\item For each $y^{(advice)} \in \F_2^{\{e^{(1)}, \ldots, e^{(s)}\}}$: 
	\begin{enumerate}
	\item For each $i \in [s]$, and for each $e \sim e^{(i)}$, define $y_e$ to be the value uniquely determined by $y^{(advice)}_{e^{(i)}}$, as given in Item (A) of Observation~\ref{obs:ABC}.  This can be done in time $O(nd)$, again by Breadth-First-Search.  

	\item Let $B' \subseteq L \cup R$ be as in \eqref{eq:defB'}.  For each $v \not\in B'$, find the unique $y|_{\inset{v}\times\Gamma(v)}$ so that $y_{(v,u)}$ is consistent with existing assignments to $y$, or determine that no such $y$ exists.  As above, this can be done in time $n \cdot \poly(d)$.  If no such $y$ exists for some $v$, continue to the next choice of $y^{(advice)}$.

	\item Use the linear-time erasure unique decoding algorithm \textsc{UniqueDecode} from Lemma \ref{lem:uniquedec} to find a unique $y \in \F_2^E$ that agrees with all the choices of $y$ made so far, or determine that none exists.  If it exists, add $y$ to $\mathcal{L}$. By Lemma \ref{lem:uniquedec}, this can be done in time $n \cdot \poly(d)/\epsilon$.
\end{enumerate}
\item Return $\mathcal{L}$.
\end{enumerate}
\vspace{.3cm}
\end{minipage}
& \hspace{.3cm}
\end{tabular}
}
\caption{\textsc{\LDS}: Returns $\mathrm{List}_{\cC}(z)$ in time $O_{r,\delta, \eps}(n)$.  However, the dependence on $r, \delta, \epsilon$ is not good, and is improved in \textsc{ListDecode}, given in Figure~\ref{fig:alg2}. }
\label{fig:alg1}
\end{figure}

\vspace{.5cm}

\section{Final algorithm}\label{sec:faster}
The algorithm \textsc{\LDS} runs in
time $O_{r, \delta,\epsilon}(n \cdot \poly(d))$, but the constant inside the $O_{r,\delta, \eps}(\cdot)$ is exponential in $\poly\left(\frac{2^r} {\epsilon \delta} \right)$, since there are $s =  \poly\left(\frac{2^r} {\epsilon \delta} \right)$ equivalence classes, and we exhaust over all $2^s$ possible assignments to representatives from these classes.
In this section, we will show how to do significantly better and obtain a running time that depends polynomially on $2^r,  1/\delta, 1/\epsilon$, finishing the proof of Theorem~\ref{thm:main}.  
The basic idea is as follows.  
Instead of exhausting over all possible ways to assign values to the edges $e^{(1)}, \ldots, e^{(s)}$, we will set up and solve a linear system to find a description of the ways to assign these values that will lead to legitimate codewords. Specifically, we prove the following lemma.

\begin{lemma}\label{alg:findcands}
There is an algorithm \textsc{\FC} which, given the state of \textsc{\LDS} at the end of Step 2, runs in time $n \cdot \poly(d,s)$ and returns
$A \in \F_2^{nd \times s}$,
$b \in \F_2^{nd}$, 
$\hat A \in \F_2^{s \times a}$,
and $\hat b \in \F_2^{s}$ so that
\[ \mathrm{List}_{\cC}(z) = \left\{ Ax + b \suchthat x= \hat A\hat x + \hat b\text{ for some $\hat x \in \F_2^{a}$ } \right\}, \]
where $a := \dim(\mathrm{List}_{\cC}(z))$ satisfies $a \leq s$.
\end{lemma}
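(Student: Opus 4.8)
The plan is to turn the exhaustive search of Step 3 of \textsc{\LDS} into linear algebra. Let $e^{(1)},\dots,e^{(s)}$ be the representative edges chosen at the end of Step 2, and introduce formal variables $x_1,\dots,x_s \in \F_2$, one per representative, with $x_i$ standing for the (unknown) label of $e^{(i)}$. By Item (A) of Observation~\ref{obs:ABC} together with Definition~\ref{def:global_equiv}, for every edge $e \in E'$ there is a known pair $(\alpha_e, \beta_e) \in \F_2 \times [s]$ (computed once, by BFS over $E'$, as in Step 3(a) of \textsc{\LDS}) so that any codeword $c \in \mathcal{L}$ must satisfy $c_e = x_{\beta_e} + \alpha_e$ where $x_{\beta_e} = c_{e^{(\beta_e)}}$; for $e \notin E'$ we leave $c_e$ as a genuine free unknown for now. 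This already gives an affine map $x \mapsto$ (partial assignment on $E'$), which I record as the first block of a matrix; the remaining coordinates need to be expressed in terms of $x$ as well, and that is the next step.

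Next I would propagate these assignments through the local unique-decoding steps, exactly as Step 3(b) of \textsc{\LDS} does, but symbolically. For each vertex $v \notin B'$, the restriction of a codeword to the $d$ edges at $v$ lies in the coset of $\cC_0$, and at least $(1-\delta)d$ of those coordinates are already affine functions of $x$ (the ones in $E'$); since $\cC_0$ has distance $\delta$, the coset is determined, so the inner codeword at $v$ is an affine function of $x$ — computable in $O(d^3)$ time by solving one linear system over $\F_2$ (with symbolic right-hand sides, i.e.\ solve for the $\F_2$-linear and the affine part separately). This may impose a consistency constraint on $x$ whenever two of the already-fixed $E'$-coordinates at $v$ disagree for a given $x$; I collect all such constraints into a linear system. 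Doing this over all $v \notin B'$ takes $n \cdot \poly(d)$ time and expresses every edge of $E \setminus E(B')$ as an affine function of $x$, while the only remaining unlabeled edges lie in $E(B')$, of which there are at most $(1-\tfrac{\eps}{4})(\delta - \tfrac{\lambda}{d})\delta nd$ by Lemma~\ref{lem:equiv_2}(2).

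Finally I would run \textsc{UniqueDecode} from Lemma~\ref{lem:uniquedec} symbolically on the coordinates in $E(B')$. That algorithm is itself a sequence of local linear-algebra operations (see Appendix~\ref{app:uniquedec}), so feeding it affine-in-$x$ inputs produces affine-in-$x$ outputs for the edges of $E(B')$ — plus, possibly, more linear constraints on $x$. Stacking the "$E'$ block", the "$E \setminus E(B')$ block", and the "$E(B')$ block" gives a single $A \in \F_2^{nd \times s}$ and $b \in \F_2^{nd}$ with the property that, for any $x \in \F_2^s$, the vector $Ax+b$ is the unique element of $\F_2^E$ consistent with all the forced choices made from $x$ — and $Ax+b \in \cC$ iff $x$ satisfies the accumulated linear system. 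Solving that system (Gaussian elimination in $\poly(s)$ time, or in $n\cdot\poly(d,s)$ including its assembly) yields its affine solution set $\{\hat A \hat x + \hat b : \hat x \in \F_2^a\}$ with $a \le s$; since distinct valid $x$ give distinct codewords (they differ on some $e^{(i)}$), $a = \dim(\mathrm{List}_{\cC}(z))$ and $\mathrm{List}_{\cC}(z) = \{Ax+b : x = \hat A \hat x + \hat b,\ \hat x \in \F_2^a\}$, as claimed.

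The main obstacle is bookkeeping rather than a conceptual hurdle: one must verify that every subroutine invoked by \textsc{\LDS} in Step 3 (local coset decoding at $v \notin B'$, and the global \textsc{UniqueDecode}) is \emph{affine} in its input — i.e.\ that running it on a symbolic affine input $x \mapsto (\text{linear part})\cdot x + (\text{affine part})$ is legitimate and cheap — and that the consistency requirements it implicitly checks are \emph{linear} conditions on $x$, so that they can all be merged into one homogeneous-plus-affine system. Establishing the linearity of \textsc{UniqueDecode} cleanly (so that "solve symbolically" makes sense and costs only a $\poly(d,s)$ overhead per vertex) is the most delicate point, but it follows from the fact that erasure decoding of a linear code is just solving a linear system, applied locally at each vertex in the peeling order used in Appendix~\ref{app:uniquedec}.
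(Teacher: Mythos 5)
Your construction of $A$ and $b$ is exactly the paper's: run Steps 3(a)--(c) of \textsc{\LDS} symbolically, keeping $x = y^{(advice)}$ as formal variables, so that the BFS labeling of $E'$, the local coset-decoding at each $v \notin B'$, and each peeling round of \textsc{UniqueDecode} all become affine maps composed together (the paper organizes this as a sequence $(A^{(t)}, b^{(t)})$ with $A^{(t)} = \bigl[\begin{smallmatrix} A^{(t-1)} \\ F A^{(t-1)} \end{smallmatrix}\bigr]$, where $F$ has $d$-sparse rows, which is what keeps the total cost at $n\cdot\poly(d,s)$). Where you diverge is in how the admissible set of $x$'s is characterized: you propose to accumulate the consistency constraints \emph{during} the symbolic propagation (one batch per over-determined local decoding), whereas the paper does the propagation unconditionally and only afterwards imposes the single global system $HAx = Hb$, where $H$ is the parity-check matrix of $\cC$; the $d$-sparsity of the rows of $H$ makes computing $HA$ and $Hb$ cost only $O(nd^2s)$, after which $\hat A, \hat b$ come from $\poly(s)$ linear algebra on a row basis of $HA$.

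The one soft spot in your route is the asserted equivalence ``$Ax+b \in \cC$ iff $x$ satisfies the accumulated linear system.'' The forward direction is clear, but the converse requires that your collected constraints be \emph{complete}: a vertex all of whose incident edges get labeled from elsewhere (e.g.\ every edge at $v$ lies in $E'$, or gets assigned by decodings at $v$'s neighbors) is never the site of a decoding step, so no constraint is ever recorded certifying that $(Ax+b)|_{\{v\}\times\Gamma(v)}$ lies in $\cC_0$ (and agrees with $z$ there). You would need to add an explicit final pass recording the vertex constraint at every $v$ --- at which point you have essentially reconstructed the paper's check $H(Ax+b)=0$. So the approach is sound and repairable, but as written the ``if'' direction of your characterization is not justified; the paper's post-hoc global parity check is precisely the clean way to close it.
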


The above lemma immediately implies Theorem \ref{thm:main}:
We first run Steps 1 and 2 in \textsc{\LDS} in order to find the set $E'$ and its partition into equivalence classes.  As before, this takes time $n \cdot \poly(d)$.  Next, we run \textsc{\FC} in order to find a linear-algebraic description of the list $\mathcal{L}$, which we return.  This second step takes time $n \cdot \poly(s,d)$, for a total running time of $n \cdot \poly(s,d)$.  Plugging in our definition of $s$ proves Theorem~\ref{thm:main}. The formal description of the final algorithm \textsc{ListDecode} is given in Figure \ref{fig:alg2}.
 The rest of this section is devoted to the proof of Lemma \ref{alg:findcands}.

First, note that every value $y_e$ determined by \textsc{\LDS} is some affine function of the labels on $e^{(1)}, \ldots, e^{(s)}$.  That is, there is some matrix $A \in \F_2^{dn \times s}$ and some vector $b \in \F_2^{dn}$ so that the list generated by \textsc{\LDS} is
\[ \inset{ Ax+ b \suchthat x \in \F_2^s }, \]
where $x := y^{(advice)}$.
Our goal in \textsc{\FC} will thus be to find this $A$ and $b$ efficiently, as well as to find a description of the $x$'s so that $Ax + b$ is actually a codeword in $\cC$. 
An overview of the algorithm \textsc{\FC} is given in Figure~\ref{fig:findcands}, and the steps are described below.

\subsection{Finding $A$ and $b$}
The first step of the algorithm will be to find $A$ and $b$.  
To find this efficiently, we will mirror the decoding algorithm in \textsc{\LDS}, except we will do it while keeping the choices of $y^{(advice)}$ as variables.
As we will see below, this can be done in time $n \cdot \poly(s,d)$. 
For this, we shall find a series $\left(A^{(t)},b^{(t)}\right)$ for $t=0,1, \ldots, T$, where $(A,b)=\left(A^{(T)},b^{(T)}\right)$ as follows.

\paragraph{Finding $A^{(0)}$ and $b^{(0)}$.}
First, let $E_0 :=E'$, and let $A^{(0)} \in \F_2^{E_0 \times s}$ and $b^{(0)} \in \F_2^{E_0}$ such that
\[ (A^{(0)} x + b^{(0)})_e =  y_{e} ,
 \]
where $x := y^{(advice)}$, and 
$y_e$ is as in Step (3a) in Algorithm \textsc{\LDS}.
Note that $A^{(0)}$ has rows which are $1$-sparse, and that $A^{(0)}, b^{(0)}$ 
can be created in time $O(nd)$ given the matrices $G_v$ and vectors $b_v$.  Further note that, for any $c \in \cC$, $c|_{E_0} = A^{(0)} c|_{\{e^{(1)}, \ldots, e^{(s)}\}} + b^{(0)}$.

\paragraph{Finding $A^{(1)}$ and $b^{(1)}$.}
Recalling $B'$ from \eqref{eq:defB'}, let $E_1:= \left(E \setminus E(B')\right) \cup E_0$.  Note that for each $e \in E_1 \setminus E_0$, the label on $e$ can be determined in an affine way from the labels on edges in $E_0$.  More precisely, there is some vector $f^{(e)} \in \F_2^{E_0}$ of weight at most $d$ and some $h^{(e)} \in \F_2$ so that for any $c \in \mathcal{C}$, 
$$
c_e = (f^{(e)})^T \cdot c|_{E_0} + h^{(e)},
$$
and moreover $f^{(e)}$ and $h^{(e)}$ can
be found in time $\poly(d)$ by inverting a submatrix of one of the matrices $G_v$.  

Let $F \in \F_2^{(E_1\setminus E_0) \times E_0}$ be the matrix with the $f^{(e)}$ as rows, let $h\in \F_2^{E_1 \setminus E_0 }$ be the vector with entries $h^{(e)}$, and let
\[ A^{(1)} := \begin{bmatrix} \\ A^{(0)} \\ \\ \hline \\ F A^{(0)}\\  \\ \end{bmatrix} \qquad \text{and} \qquad b^{(1)} := \begin{pmatrix} | \\ b^{(0)} \\ | \\ \hline  | \\ F  b^{(0)} +h \\ | \\\end{pmatrix}. \]
Note that $A^{(1)}, b^{(1)}$ can be created in time $n\cdot \poly(d) \cdot s$ given $A^{(0)}$, $b^{(0)}$, $F$, and $h$.
Further note that for any $c \in \cC$, $c|_{E_1} = A^{(1)} c|_{\{e^{(1)}, \ldots, e^{(s)}\}} + b^{(1)}$.

\paragraph{Finding $A^{(t)}$ and $b^{(t)}$ for $t=2, \ldots, T$.}
At this point, by the analysis above (following from Lemma \ref{lem:equiv_2}), we know that there are at most $\left(1 - \frac \epsilon 4\right)\left(\delta - \frac \lambda d\right) \delta nd$ edges which are not in $E_1$.   If we had labels for the edges in $E_1$, 
then by Lemma~\ref{lem:uniquedec} we can use the algorithm \textsc{UniqueDecode} to recover the rest.

The algorithm \textsc{UniqueDecode} is given in Appendix~\ref{app:uniquedec} in Figure~\ref{fig:uniquedec}.
The basic idea is to iteratively decode $\cC_0$ at vertices in $L$, then $R$, then $L$, and so on, to arrive at a unique assignment for all of the edges.
In order to do this with matrices, we will continue as above, creating $A^{(t)}, b^{(t)}$ from $A^{(t-1)},b^{(t-1)}$ for larger $t$ just as we did for $t=1$.  Note that the sets $E_t$ in \textsc{UniqueDecode} play the same role that they do here; $E_t$ represents the set of edges for which a label can be assigned in step $t$.

More precisely,
suppose that at step $t-1$, \textsc{UniqueDecode} has assigned labels to $E_{t-1}$, and 
suppose that we have $A^{(t-1)} \in \F_2^{E_{t-1} \times s}$ and $b^{(t-1)} \in \F_2^{E_{t-1}}$ so that
for any $c \in \cC$, 
$$
c|_{E_{t-1}} = A^{(t-1)} c|_{\{e^{(1)}, \ldots, e^{(s)}\}} + b^{(t-1)}. 
$$
At the next step, \textsc{UniqueDecode} would have assigned labels to edges in $E_{t} \setminus E_{t-1}$.  We note that the total amount of time 
(over all iterations)
to determine the edges in $E_{t}\setminus E_{t-1}$ is the same as in \textsc{UniqueDecode}, which, with the right bookkeeping, is $n \cdot \poly(d)/\epsilon$. 

Then as above, for every $e \in E_t \setminus E_{t-1}$, there is some vector $f^{(e)} \in \F_2^{E_{t-1}}$ of weight at most $d$, and some $h^{(e)} \in \F_2$ so that  for any $c \in \cC$, 
$$
c_e = (f^{(e)})^T \cdot c|_{E_{t-1}} + h^{(e)},
$$
and moreover, these vectors can be found in time $\poly(d)$.
Then, as above, let $F \in \F_2^{(E_{t}\setminus E_{t-1}) \times E_{t-1}}$ be the matrix with the $f^{(e)}$ as rows, let $h\in \F_2^{E_t \setminus E_{t-1}}$ be the vector with entries $h^{(e)}$, and  let
\[ A^{(t)} := \begin{bmatrix} \\ A^{(t-1)} \\ \\ \hline \\ F A^{(t-1)}\\  \\ \end{bmatrix} \qquad \text{and} \qquad b^{(t)} := \begin{pmatrix} | \\ b^{(t-1)} \\ | \\ \hline  | \\ Fb^{(t-1)} + h \\ | \\\end{pmatrix}. \]
As above, $A^{(t)}, b^{(t)}$ can be created in time $|E_t \setminus E_{t-1}| \cdot \poly(d) \cdot s$, and
for any $c \in \cC$, $$c|_{E_t} = A^{(t)} c|_{\{e^{(1)}, \ldots, e^{(s)}\}} + b^{(t)}.$$

We continue this way until $E_T = E$, which happens eventually by Lemma~\ref{lem:uniquedec}.
Then, the amount of work that has been done to compute $A := A^{(T)}$ and $b:= b^{(T)}$ is
\[  n\cdot \poly(d) \cdot s + \sum_{t=2}^{T} |E_t \setminus E_{t-1} | \cdot \poly(d) \cdot s = n \cdot \poly(d,s), \]
as claimed.

\subsection{Finding $\hat A$ and $\hat b$}

Once we have found $A$ and $B$, our goal is to find the set of $x \in \F_2^s$ so that
\begin{equation}\label{eq:wantH}
HAx + Hb= 0,
\end{equation}
where $H \in \F_2^{2nd(1 - R) \times nd}$ is the parity-check matrix for $\cC$.

First, notice that given the parity-check matrix for $\cC_0$, $H_0 \in \F_2^{d(1 - R) \times d}$, and a description of $G$, we can access any entry of $H$ in time $O(1)$: for each vertex $v$, there is a parity check for each row of $H_0$ on the edges incident to $v$.
Next, notice that we can compute $HA$ in time $O(nd^2s)$:  each row of $H$ has at most $d$ nonzeros, so for each of the $O(nd)$ rows of $H$, we take time $O(ds)$ to compute the corresponding row of $HA$.  Similarly we can compute $Hb$ in time $O(nd^2)$.

Our goal then is to find the space of $x$'s which lead to legitimate codewords, which is
\[ \mathcal{W} = \left\{ x \in \F_2^s \suchthat HAx = Hb\right\}. \]
To find a description of $\mathcal{W}$, we first find a basis for the row space of $HA$, which we can do in time $O(nd s^3)$: we iterate through the $O(nd)$ rows of $HA$, and check (in time $O(s^3)$) to see if they are linearly independent from the rows we have already found.  If so, we add the new row to our basis and continue.  Suppose that $t \leq s$ is the dimension of the row space of $HA$, and let $j_1, \ldots, j_t \leq 2nd(1 - R)$ be the indices of the rows in the basis; let $J \in \F_2^{t \times s}$ be the submatrix of $HA$ with these rows.

Let $\hat A \in \F_2^{s \times (s-t)}$ be a matrix so that the columns of $\hat
A$ span $\Ker(J) = \Ker(HA)$.  Note that we can compute such an $\hat A$ in time $\poly(s)$ given $J$.  Next suppose there is some $\hat b \in \F_2^s$ so that 
\begin{equation}\label{eq:wantb}
H A \hat b = Hb.
\end{equation}
 Then 
the space we are after is
\[ \mathcal{W} = \inset{\hat A\hat x + \hat b \suchthat \hat x \in \F_2^{s-t}}. \]

If there is no such $\hat b$, then $\mathcal{L} = \emptyset$ and we should return $\bot$.
If such a $\hat b$ exists, we may compute it by finding a solution to the system
\begin{equation}\label{eq:tosolve}
J \hat b = \left(H b\right)_{j_1, \ldots, j_t}
\end{equation}
 which can be done in time $\poly(s)$.  Indeed, 
suppose that there is some $\hat b$ satisfying \eqref{eq:wantb}.  Then $\hat b$ satisfies \eqref{eq:tosolve}, and for any $b'$ which also satisfies \eqref{eq:tosolve}, $b' \in \hat b + \Ker(J) = \hat b+ \Ker(HA)$, and hence $b'$ satisfies \eqref{eq:wantb} as well.
 Then we check to see if this $\hat b$ satisfies $HA\hat b = H b$, which can be done in time $O(nds)$.  
If so, we return $A,b$ and $\hat A, \hat b$.  If not (or if no $\hat b$ satisfying \eqref{eq:tosolve} exists), then we return $\bot$.

\begin{figure}
\centering
\fbox{
\begin{tabular}{cc}
\begin{minipage}[c][]{.9\textwidth}
\vspace{.3cm}
\textbf{Algorithm:} \textsc{\FC}

\textbf{Inputs:} The state of \textsc{\LDS} after step 2. 

\textbf{Output:} $A \in \F_2^{nd \times s}, b \in \F_2^{nd}, \hat A \in \F_2^{s \times a}, \hat b \in \F_2^s$ so that 
\[ \mathcal{L} = \inset{ Ax + b \suchthat x = \hat A\hat x + \hat b \text{ for some } \hat x \in \F_2^{a} }\]
or returns $\bot$ if such things do not exist. 

\begin{enumerate}
\item Form $A^{(1)}, b^{(1)},$ and find $E_1$ as described in the text in time $n \cdot \poly(d,s)$.  
Let $P_0 \subseteq R, P_1 \subseteq L$ be the sets of vertices incident to an edge in $E \setminus E_1$.
\item For $t = 2,3, \ldots$: 
\begin{enumerate}
\item If $P_{t-1} = \emptyset$, break.
	\item Initialize  $P_t \gets \emptyset$ and $E_t \gets E_{t-1}$.
	\item For each vertex $v \in P_{t-1}$ so that $|(\inset{v} \times \Gamma(v)) \cap E_{t-1}| > (1 - \delta)d$:
	\begin{enumerate}
	\item[$\circ$] Remove $v$ from $P_{t-1}$.
	\item[$\circ$] For any $(v,u) \notin E_{t-1}$, add  $(v,u)$ to $E_t$.	
	\end{enumerate}
	\item For each vertex $v \in P_{t-1}$, for any $(v,u) \notin E_{t-1}$, add $u$ to $P_t$.
	
	\em By Lemma \ref{lem:uniquedec}, the total time (over all iterations) for the above steps is $n \cdot \poly(d)/\epsilon$. \em
	\item Find $A^{(t)}$ and $b^{(t)}$ given $A^{(t-1)}, b^{(t-1)}$ so that for all $c \in \mathcal{C}$, 
\[ c|_{E_t} = A^{(t)} c|_{e^{(1)}, \ldots, e^{(s)}} + b^{(t)}. \]
This can be done in time $|E_t \setminus E_{t-1}| \cdot \poly(s,d)$ as described in the text.
	
\end{enumerate}
\item Let $A = A^{(t)}$ and $b= b^{(t)}$.
\item Compute $HA$ and $Hb$, which can be done in time $O(nd^2s)$.
Let $t \leq s$ be the dimension of the row space of $HA$, and 
find $j_1, \ldots, j_t$ so that the rows of $HA$ indexed by $j_1, \ldots, j_t$ form a basis for the row space of $HA$.  This can be done in time $O(nds^3)$.  Let $J \in \F_2^{t \times s}$ be the submatrix of $HA$ with these rows.
\item Find $\hat A \in \F_2^{s \times (s-t)}$ whose columns are a basis for the kernel of $J$, and find $\hat b \in \F_2^s$ so that $J\hat b = (Hb)|_{j_1, \ldots, j_t}$.  This can be done in time $\poly(s)$. 
\item If $HA\hat b \neq Hb$, return $\bot$.  In this case, $\mathcal{L} = \emptyset$.
\item Otherwise, return $A, b, \hat A, \hat b$.
\end{enumerate}
\end{minipage}
\vspace{.3cm}
&\hspace{.3cm}
\end{tabular}
}
\caption{\textsc{\FC}: prunes the list of advice strings $y^{(advice)}$ in \textsc{\LDS} to a space $\mathcal{L} = \inset{ Lx + \ell \suchthat x \in \F_2^{a} }$ and returns this description.}\label{fig:findcands}
\end{figure}

\begin{figure}
\centering
\fbox{
\begin{tabular}{cc}
\begin{minipage}[c][]{.9\textwidth}
\vspace{.3cm}
\textbf{Algorithm:} \textsc{ListDecode}

\textbf{Inputs:} A description of $G = (L \cup R,E)$ and $\cC_0 \subseteq \F_2^d$, and $z \in (\F_2 \cup \{\bot\})^E$.

\textbf{Output:} A matrix $L \in \F_2^{nd \times a}$ and a vector $\ell \in \F_2^{nd}$ so that
 \[ \mathrm{List}_{\cC}(z) = \inset{ L x + \ell \suchthat x \in \F_2^{a} } \]
for some integer $a$ (which does not depend on $n$),
or else $\bot$ if $\mathrm{List}_{\cC}(z)$ is empty.

\begin{enumerate}
	\item Run Steps 1 and 2 from \textsc{\LDS} (Figure~\ref{fig:alg1}).
	\item Run \textsc{\FC} (Figure~\ref{fig:findcands}).
	\item If \textsc{\FC} returns $\bot$, return $\bot$.
	\item Otherwise, \textsc{\FC} returns $A, b, \hat A, \hat B$.
	\item Compute $L = A  \hat A$ and $\ell = A\hat b +b$.
	\item Return $L, \ell$.
\end{enumerate}
\vspace{.3cm}
\end{minipage}
& \hspace{.3cm}
\end{tabular}
}
\caption{\textsc{ListDecode}: Returns a description of $\mathrm{List}_{\cC}(z)$ in time $n \cdot \poly(d,2^r, 1/\delta, 1/\eps)$.} 
\label{fig:alg2}
\end{figure}

\section{Second generalized distance of expander codes}\label{sec:delta2}

In this section we prove Lemma~\ref{lem:delta2}, restated below.

\restatethm{Lemma \ref{lem:delta2}}{restated}{
Let $\cC_0 \subseteq \F_2^d$ be a linear code with distance $\delta$ and second generalized distance $\delta_2$, and
let $G = (L \cup R,E)$ be the double-cover of a $d$-regular expander graph with expansion $\lambda$.
Let $\epsilon>0$, and suppose that $\frac \lambda d \leq \frac{\delta_2 \delta^2 \epsilon^2} {16}$.
Then the expander code $\cC(G, \cC_0)$ has second generalized distance at least 
$
(1-\epsilon)\cdot \delta \cdot \min\{\delta_2, 2\delta\} .
$
}

We first note the following simple claim which provides an equivalent definition of generalized distance. For a vector $x \in \F_2^N$, let $\supp(x):=\{i \in [n] \mid x_i \neq 0\}$.

\begin{claim}\label{clm:general_dist_equiv}
Let $C \subseteq \F_2^N$ be a linear code.
The $r$'th generalized distance of $C$ is  $$\frac{1} {N} \min_{c_1, c_2,\ldots,c_r}\left| \supp(c_1) \cup \supp(c_2) \cup \cdots \cup \supp(c_r) \right|,$$ where the minimum is taken over all $r$-tuples $c_1, c_2, \ldots, c_r$ of linearly independent codewords in $C$.
\end{claim}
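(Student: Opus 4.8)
The plan is to prove Claim~\ref{clm:general_dist_equiv} by showing that the two expressions for the $r$'th generalized distance---the one in the definition (minimum support-size of an $r$-dimensional subspace) and the one in the claim (minimum size of a union of supports over $r$-tuples of linearly independent codewords)---are equal. Write $D_1 := \min_V |\{i : \exists v \in V, v_i \neq 0\}|$ over $r$-dimensional subspaces $V$, and $D_2 := \min_{c_1,\dots,c_r} |\supp(c_1) \cup \cdots \cup \supp(c_r)|$ over $r$-tuples of linearly independent codewords; I want $D_1 = D_2$, and then dividing by $N$ gives the claim.

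The key observation linking the two is that for any subset $V$ of a vector space over $\F_2$ (in particular a subspace), $\{i : \exists v \in V, v_i \neq 0\} = \bigcup_{v \in V} \supp(v)$. First I would prove $D_1 \geq D_2$: given an $r$-dimensional subspace $V$ achieving the minimum $D_1$, pick any basis $c_1, \dots, c_r$ of $V$; these are linearly independent codewords, and since every $v \in V$ is an $\F_2$-linear combination of the $c_j$, we have $\supp(v) \subseteq \supp(c_1) \cup \cdots \cup \supp(c_r)$, hence $\bigcup_{v \in V} \supp(v) = \supp(c_1) \cup \cdots \cup \supp(c_r)$, so $D_2 \leq |\supp(c_1) \cup \cdots \cup \supp(c_r)| = D_1$. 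Conversely, for $D_1 \leq D_2$: given linearly independent codewords $c_1, \dots, c_r$ achieving $D_2$, let $V := \spn\{c_1, \dots, c_r\}$, which is an $r$-dimensional subspace of $C$. Then $\{i : \exists v \in V, v_i \neq 0\} = \bigcup_{v \in V}\supp(v) = \supp(c_1) \cup \cdots \cup \supp(c_r)$ by the same reasoning (every element of $V$ is a combination of the $c_j$, and each $c_j \in V$), so $D_1 \leq |\{i : \exists v \in V, v_i \neq 0\}| = D_2$.

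There is no real obstacle here---the statement is essentially a restatement of the definition once one notes that the ``coordinates not identically zero on $V$'' is exactly the union of supports of elements of $V$, and that this union is unchanged if one replaces $V$ by any spanning set (in particular a basis). The only mild point worth stating carefully is that the dimension of $\spn\{c_1,\dots,c_r\}$ is exactly $r$ precisely because the $c_j$ are assumed linearly independent, which is why the minimum in the claim is taken over linearly independent tuples rather than arbitrary tuples; and conversely that a basis of an $r$-dimensional $V$ furnishes a valid $r$-tuple. I would write this up in a few lines, making both inclusions explicit.
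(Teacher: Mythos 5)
Your proof is correct and is exactly the standard argument the paper has in mind: the paper states this as a ``simple claim'' and omits the proof entirely, and your two inclusions (basis of a minimizing subspace in one direction, span of a minimizing independent tuple in the other, using that the union of supports is preserved under passing between a subspace and any spanning set) fill in precisely the intended reasoning. Nothing is missing.
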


We proceed to the proof of Lemma \ref{lem:delta2}. 
Let $c,c'$ be two distinct  non-zero codewords in $\cC(G, \cC_0)$. Let $F:=\supp(c)$ and $F':=\supp(c')$, by Claim \ref{clm:general_dist_equiv} it suffices to show that 
 $$|F \cup F'| \geq (1-\epsilon) \cdot \delta \cdot \min\{\delta_2, 2\delta\}  dn.$$ Let  $W$ denote the subset of vertices $v\in L \cup R$ which satisfy that $c|_{\Gamma(v)}, c'|_{\Gamma(v)}$ are two distinct non-zero  codewords in $\cC_0$. Let $\epsilon_0:=\frac{\delta^{2} \epsilon} {8}$.
 Below we divide into cases.
 
 \paragraph{Case 1: $|F \cap F'| \leq \epsilon_0 d n$.}

In this case, 
\begin{align*}
 |F \cup F'| & = |F| + |F'| - |F \cap F'| \\
& \geq \left(2\delta(\delta -\lambda/d) - \epsilon_0 \right) dn \\
& \geq (1-\epsilon)2\delta^2 dn,
\end{align*}
where the first inequality follows  since by Lemma \ref{lem:distance}, the code
$\cC(G, \cC_0)$
has relative distance at least $\delta(\delta -\lambda/d)$, and the second inequality follows by choice of $\frac{\lambda} {d} \leq \frac{\delta \epsilon} {2}$ and $\epsilon_0 \leq \delta^2 \epsilon$.

\paragraph{Case 2: $|W| \geq  \epsilon_0^2 n$.} Without loss of generality, we may assume that $|W \cap L| \geq |W \cap R|$, so $|W \cap L| \geq \frac{\epsilon_0^2} {2} n$. We apply the expander mixing lemma with $S_1 :=W \cap L$, and $T_1 \subseteq R$ the set of all right vertices  that are incident to an edge from $F \cup F'$. 

Recall that for any vertex $v \in S_1 $ it holds that $c|_{\Gamma(v)}, c'|_{\Gamma(v)}$ are two distinct non-zero codewords in $\cC_0$, and so 
$$\big|\supp\left(c|_{\Gamma(v)}\right) \cup \supp\left( c'|_{\Gamma(v)}\right)\big| \geq \delta_2 \cdot d.$$ Therefore, any vertex $v \in S_1 $ is incident to at least $\delta_2 d$ edges in $F \cup F'$, and so  $|E(S_1 ,T_1)| \geq \delta_2 d|S_1 |$. 

By the expander mixing lemma, the above implies in turn that
$$\delta_2 d|S_1 | \leq |E(S_1 ,T_1)| \leq  \frac{d} {n} |S_1 ||T_1| +\lambda \sqrt{|S_1 ||T_1|},$$
and rearranging gives
$$|T_1| \geq \left( \delta_2 - \frac{\lambda} {d} \sqrt{\frac {|T_1|} {|S_1 |} } \right)  n \geq \left(\delta_2 - \frac{\lambda} {d}
 \cdot \frac {2} {\epsilon_0} \right) \cdot n ,$$
where the second inequality follows by assumption that $|S_1 | \geq \frac{\epsilon_0^2 } {2}n$.

Finally, note that any vertex $v\in T_1$ has at least $\delta d$ incident edges in $F \cup F'$, and so 
$$|F \cup F'| \geq \delta d |T_1| \geq  \delta \left(\delta_2 - \frac{\lambda} {d}  \cdot \frac {2} {\epsilon_0}\right)  d n \geq \delta \delta_2 (1-\epsilon)d n,$$
where the last inequality follows by choice of $\frac \lambda d \leq  \frac{\delta_2 \delta^2 \epsilon^2} {16} = \frac{\delta_2   \epsilon_0 \epsilon} {2}$.

\paragraph{Case 3: $|F\cap F'| \geq \epsilon_0 d n$ and $|W| \leq  \epsilon_0^2 n$.} 
Under these assumptions, Claims \ref{clm:delta2_intersect} and \ref{clm:delta2_diff} below show that  both the intersection
 $F\cap F'$ and the symmetric difference $F \triangle F'$ are of size at least $(1-\epsilon)\delta^2 dn$. This implies in turn that 
 $$|F \cup F'| = |F \cap F'| + |F \triangle F'| \geq (1-\epsilon) 2 \delta^2dn.$$

\begin{claim}\label{clm:delta2_intersect}
$|F \cap F'| \geq (1-\epsilon)\delta^2dn$.
\end{claim}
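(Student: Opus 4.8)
The plan is to leverage the fact that at every vertex $v \notin W$ the pair $\inparen{c|_{\Gamma(v)},c'|_{\Gamma(v)}}$ is \emph{not} a pair of two distinct nonzero codewords, so either $c|_{\Gamma(v)} = c'|_{\Gamma(v)}$, or one of the two restrictions is zero. Let $S \subseteq L$ (resp.\ $T \subseteq R$) be the set of left (resp.\ right) vertices $v \notin W$ with $c|_{\Gamma(v)} = c'|_{\Gamma(v)} \neq 0$. Since $\cC_0$ has distance $\delta$, each vertex of $S \cup T$ is incident to at least $\delta d$ edges of $F = \supp(c)$, and since $c$ and $c'$ agree there, all of these edges lie in $F \cap F'$. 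Because edges incident to distinct right vertices are distinct, this gives $|F \cap F'| \geq \delta d\,|T|$, so it suffices to prove $|T| \geq (1-\epsilon)\delta n$.

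I would next record two structural facts. (i) Every edge of $F \cap F'$ has its left endpoint in $S \cup W$ and its right endpoint in $T \cup W$: at an endpoint $v$ of such an edge both $c|_{\Gamma(v)}$ and $c'|_{\Gamma(v)}$ are nonzero (each contains the $1$ on that edge), so $v \notin W$ forces $c|_{\Gamma(v)} = c'|_{\Gamma(v)}$, i.e.\ $v\in S$ if $v\in L$ and $v\in T$ if $v\in R$. (ii) Every edge of $F$ incident to a vertex of $S$ has its right endpoint in $T \cup (W \cap R)$: for $u \in S$ we have $c|_{\Gamma(u)} = c'|_{\Gamma(u)}$, so the edge actually lies in $F \cap F'$, and (i) applies to its right endpoint. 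From (i), at most $|W| d \leq \epsilon_0^2 nd$ edges of $F \cap F'$ have their left endpoint in $W$, so since $|F \cap F'| \geq \epsilon_0 dn$ in Case~3, at least $\tfrac{\epsilon_0}{2}nd$ of them have left endpoint in $S$, whence $|S| \geq \tfrac{\epsilon_0}{2}n$.

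Finally I would run an expansion bootstrap. By (ii) the $\geq \delta d\,|S|$ edges of $F$ incident to $S$ all lie in $E\inparen{S, T \cup (W \cap R)}$, so by the Expander Mixing Lemma (Theorem~\ref{thm:eml}),
\[ \delta d\,|S| \;\leq\; \frac{d}{n}\,|S|\inparen{|T| + |W|} + \lambda\sqrt{|S|\inparen{|T| + |W|}} . \]
Dividing through by $d\,|S|$, bounding $\sqrt{\inparen{|T|+|W|}/|S|} \leq \sqrt{4/\epsilon_0}$ using $|S| \geq \tfrac{\epsilon_0}{2}n$, and plugging in $\lambda/d \leq \delta_2\delta^2\epsilon^2/16 \leq \delta^2\epsilon^2/16$ together with $\epsilon_0 = \delta^2\epsilon/8$, one checks that $\tfrac{2\lambda}{d\sqrt{\epsilon_0}} + \epsilon_0^2 \leq \epsilon\delta$; since also $|W| \leq \epsilon_0^2 n$, this rearranges to $|T| \geq (1-\epsilon)\delta n$, which completes the proof. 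The only real work is this last constant check: note that the Case~3 hypothesis $|F\cap F'| \geq \epsilon_0 dn$ is used \emph{solely} to force $|S| = \Omega(n)$ --- without which the term $\lambda\sqrt{(|T|+|W|)/|S|}$ is not a constant and the bootstrap collapses --- and that the value $\epsilon_0 = \delta^2\epsilon/8$ is calibrated precisely so that both the $|W|$-error and the $\lambda/d$-error stay below $\epsilon\delta$. The left/right asymmetry is cosmetic: the symmetric argument with $S$ and $T$ interchanged also works and is not needed.
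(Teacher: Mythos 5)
Your proof is correct and follows essentially the same route as the paper's: identify the vertices outside $W$ where $c$ and $c'$ agree and are nonzero, use the Case-3 hypotheses to force this set to have size $\Omega(\epsilon_0 n)$ on one side, bootstrap via the Expander Mixing Lemma to show the other side has size roughly $\delta n$, and conclude since each such vertex contributes $\delta d$ edges to $F\cap F'$. The only (cosmetic) differences are that the paper absorbs the $W$-correction by subtracting $d|W|$ from the edge count and invokes a WLOG $|S_2|\geq|T_2|$ to tame the mixing-lemma error term, whereas you enlarge the right-hand set to $T\cup(W\cap R)$ and use the explicit lower bound on $|S|$ instead.
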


\begin{proof}
We apply the expander mixing lemma with $S_2 $ ($T_2$, resp.) being the set of all vertices $v \in L \setminus W$ ($v \in R \setminus W$, resp.) that are incident to an edge from $F \cap F'$. 
Without loss of generality we may assume that $|S_2 | \geq |T_2|$.

Next observe that since $\cC_0$ has relative distance at least $\delta$, any vertex $v \in S_2$ is incident to at least $\delta d$ edges in $F \cup F'$. We claim that these edges are in fact contained in $F \cap F'$; Otherwise, $v$ is incident to some edge from $F \cap F'$ and another edge from $F \triangle F'$ which means that $c|_{\Gamma(v)}, c'|_{\Gamma(v)}$ are two distinct non-zero codewords in $\cC_0$, contradicting the assumption that $v \notin W$. We conclude that any vertex $v \in S_2 $ has at least $\delta d$ incident edges that are incident to either $T_2$ or $W$.

Consequently, we have that 
$$|E(S_2 ,T_2)| \geq \delta d |S_2 | -d |W|  =   \left(\delta- \frac{|W|} {|S_2 |}\right) d|S_2 | \geq  \left(\delta  - \frac{\epsilon_0} {1-\epsilon_0}\right)d|S_2 | ,$$
where the last inequality uses the assumptions that  $|W| \leq \epsilon_0^2 n$ and $|F\cap F'| \geq \epsilon_0 d n$, implying in turn that
$$|S_2 | \geq \frac{|F \cap F'|} {d} -|W| \geq \epsilon_0 (1- \epsilon_0) n .$$
On the other hand, by the expander mixing lemma we have that
$$
 |E(S_2 ,T_2)| \leq \frac{d} {n} |S_2 ||T_2| +\lambda \sqrt{|S_2 ||T_2|},
$$

Combining the above, rearranging,  and recalling our assumption that $|S_2 | \geq |T_2|$, gives 
$$|T_2| \geq \left(\delta - \frac{ \epsilon_0} {1-\epsilon_0} -\frac {\lambda} {d} \right ) n.$$ 
Finally, similarly to the above, we conclude that
$$|F \cap F'| \geq \left(\delta  - \frac{\epsilon_0} {1-\epsilon_0}\right)d|T_2| \geq\left(\delta - \frac{ \epsilon_0} {1-\epsilon_0} -\frac {\lambda} {d} \right )^2 dn \geq \delta^2\left(1-  \eps \right)dn,$$
where the last inequality follows by choice of $\epsilon_0 \leq \frac{\delta\epsilon} {8}$ and $\frac{\lambda} {d} \leq \frac{\delta \epsilon} {4}$.
\end{proof}

\begin{claim}\label{clm:delta2_diff}
$|F \triangle F'| \geq (1-\epsilon)\delta^2dn$.
\end{claim}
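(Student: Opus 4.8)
The plan is to sidestep the expander mixing lemma entirely and reduce Claim~\ref{clm:delta2_diff} to the minimum-distance bound of Lemma~\ref{lem:distance}. The key observation is that over $\F_2$ the symmetric difference of supports is the support of the sum: $F \triangle F' = \supp(c) \triangle \supp(c') = \supp(c+c')$. Since $\cC(G,\cC_0)$ is linear and $c \neq c'$, the vector $c+c'$ is a \emph{nonzero} codeword of $\cC(G,\cC_0)$, so what is being asked for is just a lower bound on the weight of a genuine codeword.

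First I would apply Lemma~\ref{lem:distance}, which guarantees that every nonzero codeword of $\cC(G,\cC_0)$ has relative weight at least $\delta(\delta - \lambda/d)$; since the block length is $dn$, this yields
\[ |F \triangle F'| \;=\; |\supp(c+c')| \;\geq\; \delta\!\left(\delta - \frac{\lambda}{d}\right) dn. \]
Then I would absorb the error term into the $(1-\epsilon)$ factor: the hypothesis $\frac{\lambda}{d} \leq \frac{\delta_2 \delta^2 \epsilon^2}{16}$ together with $\delta, \delta_2, \epsilon \le 1$ gives $\frac{\lambda}{d} \leq \epsilon\delta$, hence $\delta(\delta - \lambda/d) = \delta^2\bigl(1 - \tfrac{\lambda}{d\delta}\bigr) \geq (1-\epsilon)\delta^2$, and the claim follows.

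There is essentially no obstacle here. Unlike Claim~\ref{clm:delta2_intersect}, this half of Case~3 needs neither the expander mixing lemma nor the Case~3 hypotheses ($|F\cap F'| \ge \epsilon_0 dn$ and $|W| \le \epsilon_0^2 n$): $F \triangle F'$ is literally the support of a codeword, so the code's own distance does all the work. The only point requiring any care is verifying that $c+c'$ is nonzero — which holds because $c$ and $c'$ are distinct — together with the routine $\epsilon$-arithmetic above.
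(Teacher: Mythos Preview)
Your argument is correct and is genuinely simpler than the paper's. The paper proves Claim~\ref{clm:delta2_diff} by mirroring the proof of Claim~\ref{clm:delta2_intersect}: it defines $S_3, T_3 \subseteq (L\cup R)\setminus W$ as the vertices touching $F\triangle F'$, applies the Expander Mixing Lemma to $E(S_3,T_3)$, and only then extracts the bound $(1-\epsilon)\delta^2 dn$. Amusingly, in the middle of that argument the paper invokes exactly your observation---that $F\triangle F' = \supp(c+c')$ is the support of a nonzero codeword, hence $|F\triangle F'|\ge \delta(\delta-\lambda/d)dn$---but uses it only to lower-bound $|S_3|$ rather than to finish immediately. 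Your route recognizes that this intermediate fact already suffices, once one checks $\lambda/d \le \epsilon\delta$ (which follows from the standing hypothesis $\lambda/d \le \delta_2\delta^2\epsilon^2/16$, the case $\epsilon\ge 1$ being trivial). The asymmetry with Claim~\ref{clm:delta2_intersect} is real: $F\cap F'$ is not in general the support of any codeword, so there the expander-mixing machinery is unavoidable, whereas here it is not.
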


\begin{proof}
 Similarly to the previous claim, we apply the expander mixing lemma with $S_3$ ($T_3$, resp.) being the set of all vertices $v \in L \setminus W$ ($v \in R \setminus W$, resp.) that are incident to an edge from $F \triangle F'$, and we may assume that  $|S_3| \geq |T_3|$. 
 
 Once more we observe that  any vertex $v \in S_3$ is incident to at least $\delta d$ edges in $F \triangle F'$, and we conclude that any vertex $v \in S_3$ has at least $\delta d$ incident edges that are incident to either $T_3$ or $W$.
Consequently, as before we have that 
$$|E(S_3,T_3)| \geq    \left(\delta- \frac{|W|} {|S_3 |}\right) d|S_3 | \geq  \left(\delta  - \frac{\epsilon_0^2} {\delta\left(\delta - \frac \lambda d\right)-\epsilon_0^2}\right)d|S_3 | ,$$
where the last inequality uses the assumption that  $|W| \leq \epsilon_0^2 n$, and the fact that $|F \triangle F'| \geq \delta \left(\delta -\frac \lambda d\right)dn$ (since $F \triangle F'$ is the support of the non-zero codeword $c+c' \in \cC(G,\cC_0)$), implying in turn that
$$|S_3| \geq \frac{|F \cap F'|} {d} -|W| \geq \delta\left(\delta - \frac \lambda d\right) n - \epsilon_0^2 n .$$

On the other hand, by the expander mixing lemma we have that
$$
 |E(S_3 ,T_3)| \leq \frac{d} {n} |S_3 ||T_3| +\lambda \sqrt{|S_3 ||T_3|},
$$
and combining with the above, rearranging, and recalling our assumption that $|S_3 | \geq |T_3|$, this gives 
$$|T_3| \geq \left(\delta - \frac{\epsilon_0^2} {\delta\left(\delta -\frac \lambda d\right)-\epsilon_0^2}-\frac {\lambda} {d} \right ) n.$$ 
Finally, similarly to the above, we conclude that
$$|F \cap F'| \geq \left(\delta  -\frac{\epsilon_0^2} {\delta\left(\delta -\frac \lambda d\right)-\epsilon_0^2}\right)d|T_3| \geq\left(\delta - \frac{\epsilon_0^2} {\delta(\delta - \lambda/d)-\epsilon_0^2} -\frac {\lambda} {d} \right )^2 dn \geq \delta^2\left(1-  \eps \right)dn,$$
where the last inequality follows by choice of $\epsilon_0^2 \leq \frac{\delta^{3}\epsilon} {8}$ and $\frac{\lambda} {d} \leq \frac{\delta \epsilon} {4}$.
\end{proof}

\section*{Acknowledgements} 
Most of this work was done while the authors were participating in the Summer Cluster on Error-correcting Codes and High-dimensional Expansion at the Simons Institute for the Theory of Computing at UC Berkeley.
We thank the Simons Institute for the hospitality.  

\bibliographystyle{alpha}
\bibliography{refs}

\appendix
\section{Erasure unique decoding of expander codes}\label{app:uniquedec}
In this appendix we prove Lemma~\ref{lem:uniquedec}, which we repeat here:

\restatethm{Lemma \ref{lem:uniquedec}}{restated}{
Let $\cC_0 \subseteq \F_2^d$ be a linear code with distance $\delta$, and let $G = (L \cup R,E)$ be the double cover of a $d$-regular expander graph on $n$ vertices with expansion $\lambda$.
Let $\epsilon > 0$, and suppose that $\frac \lambda d < \frac \delta 2$. 
Then there is an algorithm \textsc{UniqueDecode} which uniquely decodes the expander code $\cC(G, \cC_0)$ from up to $(1 - \epsilon)\delta(\delta - \lambda/d)$ erasures in time $n \cdot \poly(d)/\epsilon$.
}

We note that this lemma is well-known and follows from the techniques of \cite{SS96,Z01}.  However, we include its proof for completeness, because our algorithm \textsc{\FC} mirrors its structure. 

The proof of the lemma follows from the algorithm \textsc{UniqueDecode}, given in Figure~\ref{fig:uniquedec}.

\begin{figure}
\begin{center}

\fbox{
\begin{tabular}{cc}
\begin{minipage}{.9\textwidth}
\vspace{.3cm}
\textbf{Algorithm:} \textsc{UniqueDecode}

\textbf{Inputs:} A description of $G = (L \cup R ,E)$ and $\cC_0 \subseteq \F_2^d$, and  $z \in (\F_2 \cup \{\bot\})^E$.

\textbf{Output:} The unique $c \in \cC(G, \cC_0)$ so that $c$ agrees with $z$ on all un-erased positions.

\bigskip

\textbf{Initialize:}
\begin{itemize}
 \item $E_1 :=\{e \in E| z_e \neq \bot \}$
\item $P_0:=\{ v\in R \mid v \; \text{is incident to an edge}\; e \in E \setminus E_1 \}$
\item $P_1:=\{ v\in L \mid v \; \text{is incident to an edge}\; e \in E \setminus E_1 \}$
\end{itemize}

\bigskip

\item For $t = 2,3, \ldots$:
\begin{enumerate}
	\item If $P_{t-1} = \emptyset$, return the fully labeled codeword.
	\item Initialize $P_t \gets \emptyset$ and $E_t \gets E_{t-1}$.
	\item  For each vertex $v \in P_{t-1}$ so that $|(\inset{v} \times \Gamma(v)) \cap E_{t-1} |> (1 - \delta)d$:
	\begin{enumerate}
	\item Run $\cC_0$'s erasure-correction algorithm  to assign labels to the edges incident to $v$.
	\item Remove $v$ from $P_{t-1}$.
	\item For any $(v,u) \notin E_{t-1}$, add  $(v,u)$ to $E_t$.
		\end{enumerate}
			\item For each vertex $v \in P_{t-1}$,  for any $(v,u) \notin E_{t-1}$, add $u$ to $P_t$.
\end{enumerate}
\vspace{.3cm} 
\end{minipage} & \hspace{.3cm}
\end{tabular} }
\end{center}
\caption{ \textsc{UniqueDecode}: Uniquely decodes an expander code from up to $\delta(\delta - \lambda/d)(1 - \epsilon)$ erasures.}\label{fig:uniquedec}
\end{figure}

To see that \textsc{UniqueDecode} is correct, first notice that on any iteration $t=2,3,\ldots$, the set $E_{t-1}$ is the subset of edges that have already been labeled before this iteration, and $P_{t-2} \cup P_{t-1}$ is the set of vertices touching an edge in $E \setminus E_{t-1}$ that we yet need to decode. The following claim bounds the size of $P_t$, and consequently the number of steps the algorithm runs until it terminates on Step 1.

\begin{claim} The following hold:
\begin{enumerate}
\item For any $t \geq 1$,
$|P_{t+1}| \leq (1-\epsilon)  \left(\delta - \frac \lambda d \right)n$.
\item For any $t \geq 2$, $|P_{t+1}| \leq \left(\frac {1} {1+\epsilon} \right)^2 |P_t|$.
\end{enumerate}
\end{claim}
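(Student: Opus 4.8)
The plan is to track, for each round $t\ge 2$ of \textsc{UniqueDecode}, the set $U_t:=E\setminus E_{t-1}$ of edges that are still unlabeled at the start of the round --- so $U_2=E\setminus E_1$ and $|U_2|\le (1-\epsilon)\delta(\delta-\lambda/d)dn$ by hypothesis --- and to maintain the following invariant: $P_{t-1}$ is exactly the set of vertices on one side of $G$ incident to an edge of $U_t$; writing $S_t$ for the set of vertices of $P_{t-1}$ with at least $\delta d$ incident edges in $U_t$, this $S_t$ is exactly the set of vertices that Step~(c) fails to decode; $U_{t+1}$ is exactly the set of edges of $U_t$ incident to $S_t$; and $P_t$, as built in Step~(d), is exactly the set of vertices on the opposite side incident to an edge of $U_{t+1}$. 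Verifying this by induction on $t$ --- the data $P_0,P_1$ come straight from the initialization, and an edge incident to a decoded vertex has already been labeled, so no edge ``leaks'' out of $U_{t+1}$ unexpectedly --- is the one genuinely fiddly step.

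With the invariant in hand, two observations do the work. First, $|U_{t+1}|\ge \delta d\,|S_t|$: each $v\in S_t$ carries at least $\delta d$ edges of $U_t$, all incident to $v$ and hence in $U_{t+1}$, and these edge-bundles are disjoint over $v\in S_t$ since each edge has exactly one endpoint on the side of $P_{t-1}$; combined with $|U_{t+1}|\le|U_2|$ this gives $|S_t|\le(1-\epsilon)(\delta-\lambda/d)n$ for every $t\ge 2$. Second, $P_{t+1}\subseteq S_t$: each edge of $U_{t+1}$, hence of $U_{t+2}\subseteq U_{t+1}$, has its endpoint on the side of $P_{t-1}$ lying in $S_t$, and $P_{t+1}$ --- which sits on that same side --- consists precisely of such endpoints of $U_{t+2}$-edges. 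Part~1 of the claim follows immediately: $|P_{t+1}|\le|S_t|\le(1-\epsilon)(\delta-\lambda/d)n$ (and in any case $|P_1|,|P_2|\le n$, which is all the running-time bound of Lemma~\ref{lem:uniquedec} needs downstream).

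For Part~2 I would feed $S_t$ and $P_t$ --- which lie on opposite sides of $G$ --- into the Expander Mixing Lemma. Every $v\in S_t$ is incident to at least $\delta d$ edges of $U_{t+1}$, each with its other endpoint in $P_t$, so $U_{t+1}\subseteq E(S_t,P_t)$ and $|E(S_t,P_t)|\ge\delta d\,|S_t|$, while Theorem~\ref{thm:eml} gives $|E(S_t,P_t)|\le\frac{d}{n}|S_t||P_t|+\lambda\sqrt{|S_t||P_t|}$. If $S_t=\emptyset$ then $P_{t+1}=\emptyset$ and we are done, so assume $|S_t|\ge1$; dividing by $\sqrt{|S_t|}$, rearranging, and invoking Part~1 (which makes $\delta-|P_t|/n>0$) gives
\[
|S_t|\;\le\;\frac{(\lambda/d)^2}{(\delta-|P_t|/n)^2}\,|P_t|\;\le\;\frac{(\lambda/d)^2}{\bigl(\epsilon(\delta-\lambda/d)+\lambda/d\bigr)^2}\,|P_t|,
\]
the second inequality using $|P_t|\le(1-\epsilon)(\delta-\lambda/d)n$ once more. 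Finally $\frac{\lambda/d}{\epsilon(\delta-\lambda/d)+\lambda/d}\le\frac{1}{1+\epsilon}$ is, after clearing denominators, equivalent to $2\lambda/d\le\delta$ --- the standing hypothesis --- so $|P_{t+1}|\le|S_t|\le\bigl(\frac{1}{1+\epsilon}\bigr)^2|P_t|$.

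I expect the main obstacle to be the first paragraph rather than the inequalities: one must read off from the pseudocode the precise combinatorial description of $E_t$, $P_t$, and $S_t$, and confirm that $U_{t+1}$ is exactly the set of $U_t$-edges incident to undecoded vertices (no leakage), so that the clean edge-count $|U_{t+1}|\ge\delta d\,|S_t|$ and the containment $P_{t+1}\subseteq S_t$ both hold on the nose. Once that is settled the rest is short, and the only quantitative ingredient beyond counting is the hypothesis $\lambda/d<\delta/2$, used exactly once to turn the Mixing-Lemma estimate into geometric decay.
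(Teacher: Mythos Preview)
Your proposal is correct and follows essentially the same route as the paper: your set $S_t$ is exactly the paper's $B_{t-1}$ (the vertices of $P_{t-1}$ that fail to decode in round $t$), the key containment $P_{t+1}\subseteq S_t$ is the paper's $P_{t+1}\subseteq B_{t-1}$, and both arguments feed $S_t$ and $P_t$ into the Expander Mixing Lemma with the same algebra. You are somewhat more explicit than the paper about the invariant linking $U_t$, $P_{t-1}$, $S_t$, and $P_t$, and you correctly flag that the $t=1$ case of Part~1 (hence the $t=2$ case of Part~2) is not covered by the argument but is also not needed for the running-time bound downstream.
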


\begin{proof}

For $t = 1,2, 3,\ldots$, let $B_{t-1} \subseteq P_{t-1}$ be the subset of vertices $v \in P_{t-1}$ that are incident to less than $(1-\delta)d$ edges in $E_{t-1}$. 
Then we have
\begin{equation}\label{eq:unique_dec_B}
P_{t+1} \subseteq B_{t-1} \subseteq P_{t-1},
\end{equation}
as all vertices $v \in P_{t-1} \setminus B_{t-1}$ are removed from $P_{t-1}$ on Step (3b), and consequently will not be present in $P_{t+1}$. 

For the first item, note that $|P_3| \leq |B_1| $ by (\ref{eq:unique_dec_B}), and that $|B_1| \leq \left(\delta - \frac \lambda d\right)(1-\epsilon)n$ since there are at most $( 1 - \epsilon) \delta \left(\delta - \frac \lambda d\right) nd$ erasures to begin with.
Moreover, we have that $|P_3| \geq |P_5| \geq |P_7| \geq \cdots$, and consequently $|P_{t+1}| \leq(1-\epsilon)  \left(\delta - \frac \lambda d \right)n$ for any even $t\geq 1$. Similar reasoning shows that the same  holds for any odd $t \geq 1$.

For the second item, note that by the expander mixing lemma, for $t=2,3,\ldots$,
\[ \delta d |B_{t-1}| \leq |E(B_{t-1}, P_t)| \leq \frac{ d} {n}|B_{t-1}||P_t|+ \lambda \sqrt{ |B_{t-1}||P_t| }, \]
as any vertex $v \in B_{t-1}$ has at least $\delta d$ unlabeled incident edges, and those edges are incident to $P_t$.
Rearranging, we have
\[ |B_{t-1}| \leq  \inparen{ \frac{ \lambda/d }{ \delta - |P_{t}|/n } }^2|P_{t}| \leq  \inparen{ \frac{1}{1 + \epsilon}}^2  |P_t|, \]
where the last inequality follows by assumption that $\frac{\lambda} {d} \leq \frac \delta 2$, and since $\frac{|P_t|} {n} \leq( 1 - \epsilon) (\delta - \lambda/d)$ by the first item.
Finally, by (\ref{eq:unique_dec_B}) this implies in turn that
\[ |P_{t+1}| \leq |B_{t-1}| \leq  \inparen{ \frac{1}{1 + \epsilon}}^2  |P_t|. \]
\end{proof}

Using the above claim we conclude that after $O((\log n)/\epsilon)$ iterations the set $P_{t-1}$ is empty, and so the algorithm terminates.  Moreover, the amount of work done is at most
\[  \poly(d) \cdot \sum_{t=1}^\infty |P_t|  = \poly(d)\cdot n \sum_{t=1}^\infty \inparen{\frac{1}{1 + \epsilon}}^{2t}  =  \poly(d) \cdot \frac{n}{\epsilon} , \]
which proves the lemma.

\end{document}